\documentclass[11pt]{article}
\usepackage[margin=1in]{geometry}
\usepackage{amsmath,amssymb,amsthm,mathtools}
\usepackage{booktabs,needspace}
\usepackage[hidelinks]{hyperref}
\usepackage{microtype}
\usepackage{xcolor}
\usepackage{amsmath,amssymb}
\usepackage{algorithm}
\usepackage{algpseudocode}

\hypersetup{pdftitle={Isomorphism Decision and Basis Construction
in Finite Abelian Groups},pdfauthor={Nader H. Bshouty}}

\newtheorem{theorem}{Theorem}[section]
\newtheorem{lemma}[theorem]{Lemma}
\newtheorem{corollary}[theorem]{Corollary}
\newtheorem{proposition}[theorem]{Proposition}
\theoremstyle{definition}
\newtheorem{example}[theorem]{Example}

\newcommand{\F}{\mathbb F}
\newcommand{\Z}{\mathbb Z}
\newcommand{\Prb}{\mathop{\rm Pr}}
\newcommand{\rank}{\mathop{\rm rank}}
\newcommand{\Bad}{\mathsf{Bad}}
\newcommand{\eval}{\mathsf{ev}}

\title{Almost Tight Bounds for Isomorphism Testing\\
and Basis Construction in Finite Abelian Groups}
\author{Nader H. Bshouty\\
Department of Computer Science, Technion, Haifa, Israel}
\date{}
\begin{document}
\maketitle

\begin{abstract}
We study isomorphism decision and basis construction for finite Abelian groups of known order \(n\). We measure complexity by the number of additions performed in the groups and by the total running time. 
We give a randomized isomorphism decision algorithm that performs \(\widetilde O(n^{1/4})\) group additions and runs in \(\widetilde O(n^{1/4})\) time. 
This improves the \(\widetilde O(\sqrt n)\) upper bound of Chen and Fu~\cite{ChenFu11} and matches Bshouty~\cite{Bshouty25} \(\Omega(n^{1/4})\) lower bound up to polylogarithmic factors. We also prove that finding a basis with success probability at least \(2/3\) requires \(\Omega(\sqrt n)\) group additions in the worst case. This matches the \(\widetilde O(\sqrt n)\) upper bound of Chen and Fu~\cite{ChenFu11} up to polylogarithmic factors.

These results separate isomorphism decision from basis construction in finite Abelian groups: their optimal worst-case complexities are \(\widetilde\Theta(n^{1/4})\) and \(\widetilde\Theta(\sqrt n)\), respectively.
\color{black}
\end{abstract}

\section{Introduction}
We study three fundamental computational problems for finite Abelian groups.
The first is the \emph{isomorphism problem}: given two Abelian groups of the
same order, decide whether they are isomorphic.
The second is the problem of finding the \emph{isomorphism type} of an
Abelian group, that is, representing it as a direct product of cyclic groups.
The third is the problem of finding an \emph{explicit basis}: in addition to
determining the cyclic factors, we find elements of the input group that
generate these factors, so that every group element has a unique
representation in terms of the basis.

Although these problems are closely related algebraically, their
computational complexities can be quite different. For example, knowing the
isomorphism type of a group does not necessarily tell us how to find an
explicit basis for it.

In this paper, we determine tight bounds for these problems. In particular,
we give algorithms achieving these bounds and prove matching lower bounds.
The precise definitions of the problems and our computational model are
given in Section~\ref{sec:background}.

\subsection{Previous work and the contribution of this paper}

We consider finite Abelian groups of known order $n$, with access to
their elements and group operation. We measure complexity by the number
of group additions and the total running time. A basis, together with
the orders of its elements, determines the group as a direct product
of cyclic groups. Converting this decomposition to invariant factors
therefore allows us to decide isomorphism. Thus, algorithms for basis
construction also give algorithms for isomorphism decision.

We first review the deterministic algorithms. Vikas~\cite{Vikas96}
gave an $O(n\log n)$-time algorithm for Abelian group isomorphism
and an $O(n)$-time algorithm for Abelian groups of prime-power order.
Kavitha~\cite{Kavitha07} subsequently obtained an $O(n)$-time algorithm
for general Abelian groups. For basis construction, Chen and
Fu~\cite{ChenFu11} gave a deterministic linear-time algorithm.
Karagiorgos and Poulakis~\cite{KaragiorgosPoulakis11} also obtained
linear-time basis algorithms when the prime factorization of $n$
is supplied. Bshouty~\cite{Bshouty25} gave a deterministic construction
of generators and triangular relations using $O(n)$ group additions
and $\widetilde O(n)$ time. These relations yield a basis and the
isomorphism type within the same bounds. Bshouty~\cite{Bshouty25} also proved $\Omega(n)$ lower bounds
for deterministic algorithms for both isomorphism decision and
basis construction, showing that the known linear-time algorithms
are optimal.

Sublinear algorithms for basis construction also have a long history.
Buchmann and Schmidt~\cite{BuchmannSchmidt05} gave an algorithm that,
from a supplied generating set $M$, computes the group structure using
$O(|M|\sqrt{|G|})$ group operations. For a generating set of logarithmic
size, this gives a $\widetilde O(\sqrt n)$ bound. Chen and
Fu~\cite{ChenFu11} developed randomized sublinear-time basis algorithms.
In particular, when the elements can be accessed through a list and
the prime factorization of $n$ is supplied, their algorithm samples
elements, obtains generating sets for the Sylow subgroups with high
probability, and applies the Buchmann--Schmidt algorithm. Its worst-case
running time is $\widetilde O(\sqrt n)$. Later,
Bshouty~\cite{Bshouty25} gave another
$\widetilde O(\sqrt n)$-time randomized algorithm using a different
approach: it constructs generators with triangular relations and
then obtains a basis by computing a Smith normal form. These basis
algorithms also solve isomorphism decision.

For lower bounds, Bshouty~\cite{Bshouty25} proved that randomized
algorithms require $\Omega(n^{1/4})$ operations for both isomorphism
decision and basis construction.
This left a gap between the fourth-root lower bound and the
square-root upper bound for each problem.

This paper closes both gaps up to factors polynomial in $\log n$.
We give a randomized algorithm that determines the isomorphism type
using $\widetilde O(n^{1/4})$ group additions and time, matching
Bshouty's lower bound up to these factors. For explicit basis
construction, we prove an $\Omega(\sqrt n)$ lower bound, matching
the existing $\widetilde O(\sqrt n)$ upper bounds up to these factors.
Thus, the two problems have different optimal worst-case exponents:
$1/4$ for isomorphism decision and $1/2$ for basis construction.

Our upper bound uses earlier algorithms whose complexity depends
on the structure of the group. Teske~\cite{Teske99} generalized the
Pohlig--Hellman discrete-logarithm method to finite Abelian groups.
Sutherland~\cite{Sutherland11} refined this approach and developed
basis algorithms with more precise bounds. For an Abelian $p$-group
of rank\footnote{The rank of a finite Abelian $p$-group is the
smallest number of elements needed to generate it.
Equivalently, if
$K\cong\Z_{p^{e_1}}\times\cdots\times\Z_{p^{e_s}}$
with all $e_i\geq1$, then $K$ has rank $s$.
The trivial group has rank zero.} $s$, the principal exponential term is $p^{s/2}$.
Here the rank is the number of nontrivial cyclic factors.
We state the precise consequence we use in
Proposition~\ref{prop:sutherland}. For each Sylow $p$-subgroup $P$,
we apply this routine to
$pP=\{px:x\in P\}$. The rank of $pP$ is at most
$\frac12\log_p|P|$, giving a fourth-root bound.
We then recover the isomorphism type of $P$ from that of $pP$
and the known order $|P|$.

Our lower-bound proof is related to the collision arguments used
by Shoup~\cite{Shoup97} for generic discrete logarithms.
The common idea is that an unexpected equality between computed
expressions reveals information about the group.
However, a discrete-logarithm lower bound does not directly imply
a basis-construction lower bound, since the required outputs differ.
We therefore analyze basis construction directly on a randomly
labelled elementary Abelian $2$-group. We compare the algorithm's
execution with a computation on formal sums and bound both the
probability of an unexpected equality and the probability that
the resulting output is a basis. This yields the square-root
lower bound. The same symbolic comparison also gives a proof of
Bshouty's fourth-root decision lower bound~\cite{Bshouty25}, which we include in
Section~\ref{sec:consequences}.
\color{black}

\subsection{Main results}

Write $\Z_m$ for the additive cyclic group of integers modulo $m$ and
$\F_p$ for the field of integers modulo a prime $p$. Thus $\F_2^r$ is
the group of binary vectors of length $r$, with coordinatewise addition
modulo two. Groups isomorphic to $\F_p^r$ are called
\emph{elementary Abelian $p$-groups}.

Throughout, the group order $n$ is known. Group elements are represented
by distinct labels, and each evaluation of the sum of two elements
counts as one group addition. We give upper bounds on both the number
of group additions and the total running time. Our lower bounds count
only group additions and therefore also imply lower bounds on
running time.

Our first result shows that constructing an explicit basis requires
$\Omega(\sqrt n)$ additions in the worst case, even when the
isomorphism type of the group is known in advance.

\begin{theorem}[Basis construction]\label{thm:main}
For $n=2^r$, every randomized algorithm that, with probability at least
$2/3$ on every labelling, outputs $r$ elements forming a basis of a
group isomorphic to $\F_2^r$ requires $\Omega(\sqrt n)$ group additions
in the worst case. This holds even when all element labels and the
identity label are supplied for free, and all computation other than
group additions is free.
\end{theorem}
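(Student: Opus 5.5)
We will apply Yao's principle with the input distribution given by a uniformly random labelling of $\F_2^r$. Fix a bijection $\sigma:\F_2^r\to L$ onto a fixed label set $L$ of size $n$, chosen uniformly at random, and let the algorithm be deterministic; it suffices to show that a deterministic algorithm making $q=o(\sqrt n)$ additions succeeds with probability $o(1)$, or at least below $2/3$, over $\sigma$. Since all labels and the identity label are supplied for free, the algorithm's only information about the group comes from the answers to addition queries $(a,b)\mapsto \sigma(\sigma^{-1}(a)+\sigma^{-1}(b))$.

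We compare the real execution with a symbolic one. Every label the algorithm touches will be treated as a formal variable. The labels are either \emph{fresh}, meaning not previously seen in any query or answer, or already expressed as formal $\F_2$-linear combinations of fresh labels, with the identity label assigned the zero combination. In the symbolic execution, the answer to a query $(a,b)$ is a new formal expression $x_a+x_b$. If that sum equals an earlier expression, we return the corresponding label; otherwise we return a label chosen uniformly among the labels not yet used. Thus the symbolic world behaves like a free $\F_2$-vector space on the touched elements.

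The two executions coincide unless a \emph{bad event} occurs. This happens when some formal sum that is nonzero as a formal expression nevertheless evaluates to an already seen element in the true group, or when a label treated as fresh is in fact a combination of earlier ones. We will bound this probability by a union bound over pairs of the at most $O(q)$ expressions. Conditioned on no collision so far, the true group elements behind the $k\le 2q+2$ fresh labels form a uniformly random sequence of distinct elements. A fixed nonzero formal $\F_2$-combination of such elements equals a given value with probability $O(1/n)$. Summing over $O(q^2)$ pairs gives $\Pr[\Bad]=O(q^2/n)$.

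The expected main obstacle is the output step. On the good event, the $r$ output labels are either formal combinations of at most $k=O(q)$ fresh labels or untouched labels. The algorithm cannot know the span of the $k$ touched elements, because it has only seen that they are free. Conditioned on the history, the true elements behind the touched fresh labels are, up to $O(q^2/n)$ error, a uniformly random linearly independent $k$-tuple. Untouched labels are uniformly random among the remaining elements. The outputs form a basis exactly when they are independent in $\F_2^r$.

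The difficulty is that independence of random elements is likely, with probability about $0.29$ for $r$ uniformly random vectors, not negligible. We would therefore first try to use the fact that the model succeeds with probability at least $2/3$ on \emph{every} labelling. A uniformly random $r$-tuple of vectors is a basis with probability $\prod_{i\ge1}(1-2^{-i})\approx0.289<2/3$. We would then show that, on the good event, the outputs are distributed essentially as such a tuple, except for forced formal dependencies, which only hurt. Ensuring $q\ge r$ formally independent outputs does not help, because independence of formal expressions with random fresh values again has probability about $0.289$ when the dimension gap is large.

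Concretely, we would write the output matrix as $M\cdot V$. Here $M$ is the known formal coefficient matrix, after the untouched labels are added as further fresh variables, and $V$ is a random matrix with independent rows whose distribution is close to uniform. The success probability is then at most $0.289+O(q^2/n)$, which is below $2/3$ unless $q=\Omega(\sqrt n)$.
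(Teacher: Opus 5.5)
Your skeleton is the paper's: a symbolic run on formal $\F_2$-sums, an $O(q^2/n)$ bound on coincidences, and the bound $\rho_r=\prod_{j=1}^r(1-2^{-j})$ on the probability that the symbolic output is a basis. The middle step, however, misidentifies the good event and the law of the hidden values on it, and as stated it fails.

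You put into the bad event that ``a label treated as fresh is in fact a combination of earlier ones''. You then claim that on the good event the touched fresh elements are, up to $O(q^2/n)$, a uniformly random \emph{linearly independent} $k$-tuple. Both are wrong.

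\begin{itemize}
\item As soon as more than $r$ fresh labels are touched, some fresh element is a combination of earlier ones with certainty. So this event is not controlled by a union bound over $O(q^2)$ pairs, and for $k>r$ a linearly independent $k$-tuple does not exist.
\item For $k\le r$ the claim is still false, and using it would wreck the conclusion. The algorithm that adds each of $r$ unseen labels to $\iota$ and outputs them would then succeed with probability $1-O(r^2/n)$. In fact it succeeds with probability about $\rho_r$.
\end{itemize}

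This is the pitfall the paper flags after Lemma~\ref{lem:ideal-output}: conditioning on the absence of coincidences, let alone on independence, can inflate the basis probability. Your collision bound has a milder form of the same issue. Conditioned on ``no collision so far'', the hidden values are uniform over assignments avoiding coincidences among the registered expressions. They are not uniform over all sequences of distinct elements.

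The bad event should be only a coincidence between the values of two of the at most $N=3q+r+1$ registered expressions. The hidden values should be treated as i.i.d.\ uniform, without conditioning. The paper makes this rigorous, and also handles the adaptivity you leave implicit: your coefficient matrix is chosen from answers that, in the real run, depend on the hidden values. The paper proceeds as follows.
\begin{itemize}
\item It generates the symbolic transcript $T$ using random fresh labels and never consults values.
\item Only afterwards does it evaluate, with independent uniform $h_j$.
\item The coupling of Lemma~\ref{lem:coupling} makes the real and symbolic runs agree off $\Bad$, so $\Prb[\text{success}]\le\Prb[S]+\Prb[\Bad]$.
\item Here $\Prb[\Bad\mid T]\le\binom N2/n$. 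Also $\Prb[S\mid T]\in\{0,\rho_r\}$, because a full-row-rank coefficient matrix applied to i.i.d.\ uniform vectors yields i.i.d.\ uniform outputs.
\end{itemize}

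Your closing paragraph, with $V$ close to uniform, is exactly this picture. But it contradicts your earlier independence claim, and it needs the justification above.

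If you prefer to condition on real histories, you must show two things. First, for each coincidence-free history $T$, the fresh values are i.i.d.\ uniform conditioned on an event of probability at least $1-\binom N2/n$; this gives $\Prb[\text{basis}\mid T]\le\rho_r/(1-\binom N2/n)$. Second, you need a separate bound on the real probability of a history containing a coincidence.

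Finally, $\rho_r$ decreases to $0.2888\ldots$ from above, so $0.289$ is not an upper bound for small $r$. Use $\rho_r\le 3/8$ for $r\ge2$ and $\rho_1=1/2$ instead; both still lie below $2/3$.
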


\Needspace{7\baselineskip}
Our second result gives a faster algorithm for determining the
isomorphism type. The algorithm computes the orders of the cyclic
factors in a direct-product decomposition without constructing
a basis of the whole group.

\begin{theorem}[Isomorphism type and decision]\label{thm:iso}
The isomorphism type of a finite Abelian group of known order $n$
can be computed with probability at least $2/3$ using
$\widetilde O(n^{1/4})$ group additions and
$\widetilde O(n^{1/4})$ total time.
Consequently, two finite Abelian groups of known order can be tested
for isomorphism with probability at least $2/3$ within the same bounds,
where $n$ is their common order.
\end{theorem}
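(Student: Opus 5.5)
We plan to follow the outline sketched in the introduction. Since $n$ is known, we first factor $n=\prod_p p^{a_p}$. Trial division up to $n^{1/4}$ costs $\widetilde O(n^{1/4})$ time. The leftover cofactor $m$ has no prime factor below $n^{1/4}$, so it is $1$, a prime, or a product of two primes (possibly equal) each exceeding $n^{1/4}$. A primality test handles the prime case, and a square test handles the case $m=q^2$. In the remaining case $m=qq'$ with $q\ne q'$, the Sylow subgroups for $q$ and $q'$ have prime order and are cyclic. Their contribution to the isomorphism type is therefore forced, so we never need to separate $q$ from $q'$. We simply treat the part of the group of order $m$ as cyclic. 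Hence only the primes with $p^{a_p}\mid n$ and $a_p\ge 2$, or $p\le n^{1/4}$, require work.

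For each such prime $p$, write $P$ for the Sylow $p$-subgroup and $n_p'=n/p^{a_p}$. We sample $O(\log n)$ uniform elements $x_i$ and compute $y_i=n_p' x_i$ by double-and-add, using $O(\log n)$ additions each. With high probability the $y_i$ generate $P$. A random element of $P$ avoids any fixed maximal subgroup with probability at least $1-1/p\ge 1/2$, and there are at most $a_p$ levels to handle, so $O(\log n)$ samples suffice after a union bound. We then form $z_i=py_i$, which generate $pP$. Since $|pP|\le |P|/p^{s}$ when $P$ has rank $s$, and $|pP|\ge 1$, we get $\rank(pP)\le \frac12\log_p|P|$ in the regime where it matters. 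More precisely, $\rank(pP)=t$ forces at least $2t$ cyclic factors of exponent at least $2$, so $|P|\ge p^{2t}$. We invoke Proposition~\ref{prop:sutherland} on the generators $z_i$ of $pP$. Its cost is dominated by roughly $p^{t/2}\le |P|^{1/4}\le n^{1/4}$, times polylogarithmic factors. This returns the isomorphism type $pP\cong\prod_j \Z_{p^{f_j}}$, and the total over all primes stays $\widetilde O(n^{1/4})$. The order $|pP|$ is not known in advance, but we expect the Proposition to accept a generating set and a bound on the order, such as $p^{a_p}$.

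Next we recover the type of $P$. If $P\cong\prod_i \Z_{p^{e_i}}$, then $pP\cong\prod_{e_i\ge2}\Z_{p^{e_i-1}}$. So the $f_j$ determine every $e_i\ge 2$ as $e_i=f_j+1$. The number $k$ of factors with $e_i=1$ is then fixed by the order: $a_p=\sum_j (f_j+1)+k$. Combining the types of all $P$, plus the cyclic part of order $m$, gives the type of the group. Converting to invariant factors gives the isomorphism test for two groups. We amplify the success probability to $2/3$ overall by taking $O(\log\log n)$ repetitions per prime and a union bound over the $O(\log n)$ primes.

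We expect the main obstacle to be matching the hypotheses and cost of Proposition~\ref{prop:sutherland} exactly. We must check that its bound is genuinely $p^{\rank/2}$ times $\mathrm{polylog}(|P|)$, with no extra factors of $p$ that would hurt when $p$ is large. For large $p$ with $a_p\ge2$ but small rank this should be fine, since $p^{t/2}\le p^{a_p/4}$. We also need the generating-set size to enter only polylogarithmically. If the Proposition instead requires a basis of an ambient group or the exact order of $pP$, we would first compute the orders of the $z_i$ in $\widetilde O(1)$ additions by repeated $p$-multiplication, and pass $\max$-order and $p^{a_p}$ bounds instead.
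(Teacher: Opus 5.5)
\textbf{The gap: factoring by trial division.} Your group-theoretic core is the paper's: push uniform samples into $pG_p$, run Sutherland's rank-sensitive routine on a group of rank at most $\lfloor a_p/2\rfloor$, and recover the order-$p$ factors from $a_p-\sum_j(f_j+1)$. Sampling $P$ first and then multiplying by $p$ is a harmless variant of the paper's direct sampling of $pG_p$.

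The problem is your factoring step. If every prime factor of the cofactor $m\le n$ exceeds $n^{1/4}$, then $n^{\omega/4}<m\le n$ only forces $\omega\le 3$. So $m$ can have \emph{three} prime factors counted with multiplicity, not two. The case $m=q^2q'$, with distinct primes $q,q'>n^{1/4}$, is neither prime nor a perfect square. Your procedure therefore files it under $m=qq'$ and declares the $m$-part cyclic.

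For example, $n=101^2\cdot 103$ has $n^{1/4}<33$, so trial division finds nothing. Your algorithm would then report $\Z_{101}^2\oplus\Z_{103}$ and $\Z_n$ as having the same type. This error is deterministic, so it breaks both the type computation and the isomorphism test on such orders.

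Here the prime $q$ genuinely needs work, since $a_q=2$. The computation on $qG_q$ would be cheap (rank at most one, cost about $q^{1/2}<n^{3/16}$), but it cannot be run without knowing $q$. Deciding whether $m$ is squarefree is itself a factoring-type problem. The case $m=q^3$ is mishandled the same way, although a cube test would repair that one.

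\textbf{The fix.} Factor $n$ completely within the budget, as the paper does. One option is the randomized Lenstra--Pomerance algorithm, with $n^{o(1)}$ expected time, stopped after a constant multiple of that expectation; by Markov's inequality this costs only a small constant of the error budget. Another is a deterministic $\widetilde O(n^{1/4})$ method such as Pollard--Strassen or Costa--Harvey. Alternatively, keep your trial division and then run Pollard--Strassen on $m$. In the three-factor case it finds the smallest prime factor, which is at most $m^{1/3}$, in $\widetilde O(n^{1/6})$ time; afterwards at most two prime factors remain and your case analysis applies.

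\textbf{Two smaller points.} First, $\rank(pP)=t$ gives $t$ factors of exponent at least two in $P$, not $2t$. Each contributes $p^2$, and that is what yields $|P|\ge p^{2t}$. Second, you should note that the cost bound also holds on runs where the samples fail to generate. This is because the generated subgroup lies in $pP$, so its rank is at most $\rank(pP)$ by~\eqref{eq:subgroup-rank}.
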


The isomorphism algorithm matches Bshouty's
$\Omega(n^{1/4})$ lower bound~\cite{Bshouty25} up to factors polynomial
in $\log n$. The basis lower bound matches the existing
$\widetilde O(\sqrt n)$ upper bounds, including those obtained from
the algorithms of Chen and Fu~\cite{ChenFu11},
Sutherland~\cite{Sutherland11}, and Bshouty~\cite{Bshouty25}.
Thus, these results determine the optimal worst-case exponents
for both problems and separate isomorphism decision from explicit
basis construction.

The following table summarizes the bounds. Each upper bound applies
to both the number of group additions and the total running time.
Each lower bound applies to the number of group additions alone.
The indicated families witness the lower bounds; hardness is not
asserted at every group order.

\begin{center}
\begin{tabular}{lll}
\toprule
Task & Upper bound & Lower bound and witnessing orders\\
\midrule
Isomorphism decision
  & $\widetilde O(n^{1/4})$
  & $\Omega(n^{1/4})$, $n=4^m$\\
Explicit basis construction
  & $\widetilde O(n^{1/2})$
  & $\Omega(n^{1/2})$, $n=2^r$\\
\bottomrule
\end{tabular}
\end{center}
\color{black}
\subsection{The two main ideas}

We briefly describe the two main ideas behind our lower and upper bounds.
The first explains why finding a basis requires $O(\sqrt{n})$ time,
while the second explains why determining the isomorphism type can be done
in $O(n^{1/4})$ time.

\paragraph{The lower-bound idea.}
The main difficulty in finding a basis is that the names of the group
elements reveal no information about their algebraic structure. To formalize
this intuition, we compare the actual execution of an algorithm with an
idealized symbolic execution.

In the symbolic execution, we treat every element computed by the algorithm
as a formal sum of unknown vectors. Whenever a new formal sum is obtained,
we assign it a fresh random label. The same label is reused only when exactly
the same formal sum appears again. Thus, as long as two distinct formal sums
do not represent the same group element, the labels seen by the algorithm
reveal no information about the hidden vectors.

In this idealized execution, the algorithm cannot choose a basis more
effectively than by choosing independent uniformly random vectors. Its
success probability is therefore at most the probability that $r$
independent uniformly random binary vectors form a basis.

It remains to compare the symbolic execution with the actual execution.
The two executions can differ only if two distinct formal sums evaluate to
the same group element. If the algorithm produces at most $M$ such
expressions, there are at most $\binom{M}{2}$ pairs of expressions.
For each pair, the probability of such a collision is $1/n$. Hence, by a
union bound, the probability that any collision occurs is at most
${\binom{M}{2}}/{n}$.
This quantity remains small when $M=o(\sqrt{n})$, which gives the
square-root lower bound.

\paragraph{The upper-bound idea.}
For the upper bound, consider a $p$-group $P$ of order $p^a$. Suppose its
cyclic decomposition contains factors of orders
$p^{e_1},\ldots,p^{e_r}$.
Multiplying every element of $P$ by $p$ gives the subgroup
$pP=\{px:x\in P\}$.
Under this operation, every cyclic factor of order $p$ disappears, while a
factor of order $p^e$, for $e\ge 2$, becomes a cyclic factor of order
$p^{e-1}$.

The important observation is that the rank of $pP$ is at most $a/2$.
Indeed, every cyclic factor that survives in $pP$ had order at least $p^2$
in $P$, and therefore contributed at least two powers of $p$ to
$|P|=p^a$. Consequently, there can be at most $a/2$ surviving factors.

We can therefore apply the rank-sensitive basis algorithm to $pP$. Since
$pP$ has rank at most $a/2$, its running time is, up to logarithmic factors,
at most
$p^{a/4}$.
Once the cyclic factors of $pP$ are known, the known order of $P$ determines
how many factors of order $p$ disappeared when passing from $P$ to $pP$.
Thus, to determine the isomorphism type of $P$, we do not need to find
generators for these missing order-$p$ factors; we only need to determine
how many there are. This is what allows the isomorphism-type problem to be
solved faster than the explicit-basis problem.

Section~\ref{sec:background} introduces the algebraic facts and computational
model used throughout the paper. Section~\ref{sec:coupling} formalizes the
symbolic execution described above, and Section~\ref{sec:basis-lower} uses
it to prove Theorem~\ref{thm:main}. Section~\ref{sec:upper} develops the
upper-bound argument and proves Theorem~\ref{thm:iso}.
Finally, Section~\ref{sec:consequences} gives the matching lower bound for
the decision problem and discusses the consequences for computing explicit
isomorphisms.

\section{Background and computational conventions}\label{sec:background}

\subsection{Cyclic decompositions, types, and bases}

We use additive notation: the identity is $0$, and $cx$ is the sum of $c$
copies of $x$ for a nonnegative integer $c$. The order of an element $x$, denoted by $\operatorname{ord}(x)$, is the
least positive $c$ for which $cx=0$. We write $\langle S\rangle$ for the
subgroup generated by a set or list $S$. A finite direct sum, denoted by
$\oplus$, is a Cartesian product with addition in each coordinate; it is
also called a direct product.

\Needspace{8\baselineskip}
Every finite Abelian group is a direct product of cyclic groups
of prime-power order. Two such groups are isomorphic if and only
if the lists of these prime-power orders, including repetitions,
agree after sorting. Our algorithm computes this list.
We recall the classification theorem below; see
Shoup~\cite[Section~6.6]{Shoup08} for a proof.

\begin{theorem}[Fundamental theorem of finite Abelian groups]\label{thm:structure}
Let $G$ be a finite Abelian group of order
$n=\prod_{p\mid n}p^{a_p}$, where $p$ ranges over the prime divisors of $n$.
For each such $p$, the elements whose orders are powers of $p$, together
with $0$, form a subgroup $G_p$ of order $p^{a_p}$. Moreover,
 $G\cong\bigoplus_{p\mid n}G_p$.

For each $p$ there is a unique nonincreasing list of positive integers
$\lambda_{p,1},\ldots,\lambda_{p,t_p}$ such that
$G_p\cong\bigoplus_{i=1}^{t_p}\Z_{p^{\lambda_{p,i}}}$.
The sum of the entries $\lambda_{p,i}$ in this list is $a_p$.
\end{theorem}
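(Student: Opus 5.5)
The plan is the standard proof of the fundamental theorem, in three stages: Sylow (primary) decomposition, splitting each $p$-part into cyclic factors, and uniqueness via counting invariants. The only order facts used are Lagrange's theorem and $\operatorname{ord}(cx)=\operatorname{ord}(x)/\gcd(c,\operatorname{ord}(x))$.

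\textbf{Primary decomposition.} First, $G_p$ is a subgroup. It contains $0$, and if $p^ix=0$ and $p^jy=0$ then $p^{\max(i,j)}(x-y)=0$, so $G_p$ is closed under subtraction. Next, let $x\in G$ have order $m=\prod_p p^{b_p}$ and put $m_p=m/p^{b_p}$. Since the $m_p$ have gcd $1$, choose integers $u_p$ with $\sum_p u_pm_p=1$. Then $x=\sum_p u_pm_px$, and each summand $u_pm_px$ is killed by $p^{b_p}$, so it lies in $G_p$. Hence $G=\sum_p G_p$. The sum is direct: if $\sum_p x_p=0$ with $x_p\in G_p$, multiply by $\prod_{q\ne p}|G_q|$ (or by a large enough power of each $q\ne p$). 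This kills every $x_q$ with $q\ne p$ and multiplies $x_p$ by an integer prime to $p$, which is invertible on $G_p$; therefore $x_p=0$. So $G\cong\bigoplus_p G_p$.

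\textbf{Order of $G_p$.} By Cauchy's theorem for abelian groups (proved by induction on $|G|$), a prime $q$ divides $|G_p|$ only if $G_p$ has an element of order $q$, which forces $q=p$. So $|G_p|$ is a power of $p$. Since $n=\prod_p|G_p|$, unique factorization gives $|G_p|=p^{a_p}$.

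\textbf{Existence of a cyclic decomposition of $G_p$.} This is the main obstacle. I would argue by induction on $|P|$ for a $p$-group $P$. Pick $x$ of maximal order $p^e$. The key claim is that $\langle x\rangle$ has a complement in $P$.

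To prove the claim, take a subgroup $H$ maximal with respect to $H\cap\langle x\rangle=0$, and suppose $H+\langle x\rangle\ne P$. Choose $y\notin H+\langle x\rangle$ with $py\in H+\langle x\rangle$; such $y$ exists by repeatedly multiplying an outside element by $p$. Write $py=h+kx$. Because the order of $x$ is maximal, $p^{e-1}kx=p^ey-p^{e-1}h\in H\cap\langle x\rangle$ wait—more precisely $p^{e-1}kx=-p^{e-1}h+p^ey=-p^{e-1}h\in H$, so it is $0$, and hence $p\mid k$. Write $k=pk'$ and set $z=y-k'x$. Then $z\notin H+\langle x\rangle$ and $pz\in H$. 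The subgroup $H+\langle z\rangle$ then meets $\langle x\rangle$ trivially, contradicting the maximality of $H$. So $P=\langle x\rangle\oplus H$.

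Applying induction to $H$ yields $P\cong\bigoplus_i\Z_{p^{\lambda_i}}$. Comparing orders gives $\sum_i\lambda_i=a_p$.

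\textbf{Uniqueness.} For a decomposition $P\cong\bigoplus_i\Z_{p^{\lambda_i}}$ we have
\[
|p^{k}P/p^{k+1}P|=p^{\#\{i:\lambda_i>k\}},
\]
because each factor $\Z_{p^{\lambda}}$ contributes $\Z_p$ to this quotient if $\lambda>k$ and nothing otherwise. The left side depends only on the isomorphism class of $P$. So the counts $\#\{i:\lambda_i>k\}$ for all $k\ge0$ are invariants, and they determine the nonincreasing list $\lambda_{p,1},\ldots,\lambda_{p,t_p}$ uniquely.

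Combining the three parts, two groups of the same order are isomorphic iff their multisets of prime-power orders agree. One direction is immediate; for the other, an isomorphism maps $G_p$ onto the corresponding Sylow part and so preserves these invariants.
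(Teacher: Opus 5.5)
The paper does not prove this theorem. It is quoted as background, with a pointer to Shoup's text (Section~6.6), so there is no in-paper argument to compare against and your proposal has to stand on its own. It does. Your three stages are B\'ezout for the primary decomposition, a complement to a cyclic subgroup of maximal order obtained from a maximal subgroup avoiding it, and uniqueness through the invariants $|p^kP/p^{k+1}P|$. Together they form a complete, standard proof.

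Three spots should be tightened.
\begin{itemize}
\item The assertion that $H+\langle z\rangle$ meets $\langle x\rangle$ trivially is the crux of the complement lemma and deserves its two lines. Suppose $h'+cz=dx$ with $h'\in H$. If $p\mid c$, then $cz=(c/p)\,pz\in H$, so $dx\in H\cap\langle x\rangle=0$. If $p\nmid c$, then $cz\in H+\langle x\rangle$ and $pz\in H$ force $z\in H+\langle x\rangle$, a contradiction. Together with $z\notin H$, this contradicts the maximality of $H$.
\item In the directness step, multiplying by $\prod_{q\ne p}|G_q|$ presupposes that each $|G_q|$ is prime to $p$. You only establish that in the next step, so use your parenthetical alternative with large powers of the primes $q\ne p$.
\item The self-correcting sentence should simply read: since $p^ey=0$, we get $p^{e-1}kx=-p^{e-1}h\in H\cap\langle x\rangle=0$, hence $p\mid k$.
\end{itemize}

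Two optional remarks. First, Cauchy's theorem is not needed. Your complement argument uses only that every element of $P$ has $p$-power order, so that the maximal order $p^e$ is the exponent. You may therefore apply it directly to $G_p$ and read off $|G_p|=\prod_i p^{\lambda_i}$; unique factorization then gives $|G_p|=p^{a_p}$. Second, your uniqueness invariant is precisely the one the paper relies on later. For $k=0$, $|P/pP|=p^{\rank(P)}$ is the rank of Section~\ref{S22}. Iterating Lemma~\ref{lem:reconstruct} along $P\supseteq pP\supseteq p^2P\supseteq\cdots$ determines the type from the orders $|p^kP|$, which is the same information as your quotients $|p^kP/p^{k+1}P|$.
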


Theorem~\ref{thm:structure}  uniqueness says that two groups are
isomorphic exactly when these lists agree for every prime. We call the
collection of lists the \emph{type}. Each list is a partition of $a_p$,
meaning a nonincreasing list of positive integers summing to $a_p$.
The subgroup $G_p$ is called the \emph{$p$-primary component}, or the Sylow
$p$-subgroup; no further facts about Sylow theory will be needed.

A basis records elements realizing a cyclic decomposition. More precisely,
a tuple of nonidentity elements $(g_1,\ldots,g_t)$ is a \emph{basis} if every
$x\in G$ has a unique expression
\[
 x=c_1g_1+\cdots+c_tg_t,
 \qquad 0\leq c_i<\operatorname{ord}(g_i).
\]
For a $p$-group, the basis-element orders are exactly the orders of the cyclic groups in the decomposition given by
Theorem~\ref{thm:structure}. For a general group, bases can also
combine factors of relatively prime orders. Our algorithm records the
prime-power decomposition, and the lower bound uses only $2$-groups, so
this choice does not affect the results.

\begin{example}\label{ex:type}
The group $\Z_8\oplus\Z_4\oplus\Z_2\oplus\Z_2$ has order $2^7$ and type
$(3,2,1,1)$ at the prime $2$. Its four coordinate vectors form a basis.
If its $128$ elements are renamed by arbitrary labels, the type remains
$(3,2,1,1)$, but the labels of those coordinate vectors are hidden. Producing
the four integers and producing four suitable labels are different tasks.
\end{example}

\subsection{Multiplication by a prime and rank}\label{S22}

The upper bound needs a measure of how many cyclic factors remain after
multiplication by $p$. For an Abelian $p$-group $K$, define
\[
 pK=\{px:x\in K\},
 \qquad K[p]=\{x\in K:px=0\}.
\]
Both are subgroups: the first is the image of multiplication by $p$, and
the second is its kernel. The subgroup $K[p]$ consists of the identity
and the elements of order $p$.

The quotient $K/pK$ and the subgroup $K[p]$ are vector spaces over $\F_p$:
addition is their group operation, and scalar multiplication is repeated
addition modulo $p$. This is well defined because multiplication by $p$
is zero in each group. If $K$ has $s$ nontrivial\footnote{Not $\{0\}$.} cyclic factors, each factor
contributes one copy of $\Z_p$ to both $K/pK$ and $K[p]$. By
Theorem~\ref{thm:structure}, both vector spaces therefore have dimension
$s$. We call this number the \emph{rank} of $K$ and write $\rank(K)$.
The trivial group has rank zero.

For example, the group in Example~\ref{ex:type} has rank four because $K[2]=\{0,4\}\oplus \{0,2\}\oplus \Z_2^2$ which is isomorphic to $\Z_2^4$. Its image
under multiplication by two is isomorphic to $\Z_4\oplus\Z_2$, which has
rank two. The two factors of order two have disappeared entirely.

One useful consequence will also control the algorithm on unsuccessful
sampling runs. If $U$ is a subgroup of an Abelian $p$-group $K$, then
$U[p]$ is a vector subspace of $K[p]$. The dimension interpretation above
gives
\begin{equation}\label{eq:subgroup-rank}
 \rank(U)\leq\rank(K).
\end{equation}
Finally, an integer \emph{annihilates} a group if its multiple of every
element is zero. The \emph{exponent} is the least positive such integer.
An Abelian $p$-group has exponent $p^h$, where $h$ is the largest entry
in its type.

\subsection{Input, output, and complexity conventions}

The input supplies the group order $n$. Each group element has a distinct
label, and we identify the set of labels with
$L=[n]=\{1,\ldots,n\}$. Given two labels, we can perform a group addition
to obtain the label of their sum. Choosing a uniform label therefore
samples a uniform group element without performing any group additions.
Labels and indices use $O(\log n)$ bits. We measure both the number of
group additions and the total running time. For the upper bounds, each
group addition takes one unit of time, and ordinary computation is
charged in a random-access machine. Polynomial factors in $\log n$
arising from arithmetic and sorting are included in the
$\widetilde O$ notation.

For the lower bounds, we charge only group additions. Every label is
available for free, together with a distinguished identity label
$\iota$. All comparisons, random choices, and other computation are
free. The algorithm may choose the elements to be added using the
numerical values of their labels, including labels it has never used
before. Thus, the lower bounds apply even to algorithms that perform
arbitrary computations on labels and choose arbitrary group elements
as operands.

An explicit basis must be output as labels of group elements.
For a group isomorphic to $\F_2^r$, this means exactly $r$ labels.
If the algorithm first constructs formal expressions for the basis
elements, the group additions needed to evaluate those expressions
and obtain their labels must also be counted. The same requirement
applies if the output is a procedure for computing the basis elements.
If a basis convention permits identity elements, these can be removed
by comparison with the supplied identity label.
\color{black}

An algorithm performing at most $q$ group additions can be represented
by a decision tree. Each internal node specifies two labels to be
added, its outgoing edges correspond to the possible labels of their
sum, and its leaves specify outputs. Fixing the random choices makes
the tree deterministic. We define its behavior arbitrarily on answer
sequences that cannot arise from a genuine group and terminate every
execution after at most $q$ additions. This allows us to run the same
tree with the symbolic addition procedure defined below, even when
its answers do not describe a genuine group.
\section{An unconditional symbolic comparison}\label{sec:coupling}
Fix an Abelian group $H$ of order $n$. Assume that $dh=0$ for every
$h\in H$, where $d\geq 2$ is fixed. We use $d=2$ for the basis
lower bound and $d=4$ for the isomorphism lower bound.

The elements of $H$ are presented to the algorithm using labels from
$L=[n]$. To assign the labels, choose a bijection
$\phi:L\longrightarrow H$ uniformly at random, subject to
$\phi(\iota)=0$. Thus, $\phi(x)$ is the group element represented by
label $x$, and $\iota$ is the label of the identity.

The algorithm is given the labels but not the map $\phi$.
When it adds the elements labelled $x$ and $y$, it receives the label
\[
    x\star y
    =\phi^{-1}\bigl(\phi(x)+\phi(y)\bigr).
\]
In other words, the returned label represents the sum of the two
corresponding elements of $H$. The algorithm may use the numerical
values of the labels in its computations, but the assignment of
labels to elements is random.

To analyze what the algorithm can learn from group additions, we
simulate its execution using formal sums of symbols
$X_1,X_2,\ldots$, with coefficients taken modulo $d$.
We associate a formal sum with each label used in the simulation.
We assign $0$ to the identity label and, whenever the algorithm
chooses a label that has not yet been assigned a formal sum as
an operand or an output element, we assign to it the next unused
symbol $X_1,X_2,\ldots$.
When the algorithm requests the sum of two labelled elements, we
add their formal sums and return the label associated with the
result. The algorithm sees only the labels, not the formal sums.
The precise rules for assigning labels are given below.

For example, suppose labels $x$, $y$, and $z$ are associated with
$X_1$, $X_2$, and $X_1+X_2$, respectively. When $d=2$, adding the
elements labelled $z$ and $x$ returns label $y$, because
    $(X_1+X_2)+X_1=X_2$.
In general, two formal sums are considered equal if and only if
all their coefficients agree modulo $d$.

No elements of $H$ are used during this simulation. We first
complete the simulated execution, including all the algorithm's
addition requests, the returned labels, and the final output.
Afterwards, we replace each symbol $X_i$ by an independently
chosen uniform element $h_i\in H$. A formal sum
$\sum_i c_iX_i$ then represents the element $\sum_i c_i h_i$ of $H$.

Two distinct formal sums may represent the same element of $H$
after this substitution. We call this an \emph{unexpected equality}.
We bound the probability of such equalities using the independent
choices of the elements $h_i$. The formal sums themselves are
already fixed before these elements are chosen.

These independent choices do not define the random bijection
$\phi$: a bijection assigns distinct group elements to distinct
labels, whereas independent choices may repeat. They are used
only to analyze the simulation. We prove below that the simulated
execution and an execution with a uniformly random labelling of
$H$ can be arranged to agree unless an unexpected equality occurs.

\subsection{The symbolic simulation}

We now give the precise rules for the simulation described above.
A \emph{formal expression} is a finite sum
$\sum_j c_jX_j$, where the $X_j$ are formal symbols and the
coefficients belong to $\Z_d$. We add expressions by adding their
coefficients modulo $d$. Two expressions are equal if and only if
all their coefficients agree modulo $d$. These expressions do not
yet represent particular elements of $H$.

Fix the algorithm's random choices, so that its behavior is
deterministic given the answers to its addition requests.
We maintain a record $D$ that associates labels with formal
expressions. We call a label \emph{registered} once it has been
assigned an expression. Different registered labels always have
different expressions. Initially, only the identity label is
registered, with $D(\iota)=0$.

The simulation follows three rules.
\begin{enumerate}
\item
When the algorithm requests an addition, process the first operand
label and then the second. If an operand label has not been
registered, assign to it the next unused symbol
$X_1,X_2,\ldots$.

\item
Add the two operand expressions. If the resulting expression
already has a label in $D$, return that label. Otherwise, choose
a label uniformly among all unregistered labels, assign the
resulting expression to it, and return it.

\item
When the algorithm terminates, process its output labels in their
output order. Assign the next unused symbol to each output label
that has not yet been registered.
\end{enumerate}

Once a label has been assigned an expression, that assignment
never changes. The simulation records the requested additions,
the returned labels, the assignments in $D$, and the final output.
Since the algorithm's random choices are fixed, the only randomness
in this simulation is the choice of unused labels in the second rule.

\begin{example}
Take $d=2$ and suppose the algorithm requests the addition of two
distinct, previously unregistered labels $a$ and $b$.
We assign $D(a)=X_1$ and $D(b)=X_2$. Their addition returns a new
label $c$, with $D(c)=X_1+X_2$. If the algorithm subsequently adds
the elements labelled $c$ and $a$, the simulation returns $b$,
because
    $(X_1+X_2)+X_1=X_2$.
This answer follows from the rules for formal addition.
\end{example}

After completing the simulation, choose independent uniform elements
$h_1,h_2,\ldots$ of $H$, independently of all label choices made
during the simulation. Replace each formal symbol $X_j$ by $h_j$.
This defines the evaluation
\[
    \eval\!\left(\sum_j c_jX_j\right)=\sum_j c_jh_j.
\]
The assumption $dh=0$ for every $h\in H$ ensures that reducing
coefficients modulo $d$ does not change the evaluated group element.
Evaluation also preserves addition.

Let $\mathcal E$ be the set of distinct expressions registered
during the simulation, including $0$ and the expressions assigned
to output labels. Two different expressions in $\mathcal E$ may
evaluate to the same element of $H$. We call such an equality a
\emph{collision}, or an \emph{unexpected equality}, and write
$\Bad$ for the event that at least one occurs. Thus,
\[
    \Bad
    =
    \bigl\{
    \text{there exist }u,v\in\mathcal E,\ u\ne v,
    \text{ with }\eval(u)=\eval(v)
    \bigr\}.
\]
The equality in the example is not a collision: the expressions
$(X_1+X_2)+X_1$ and $X_2$ are already formally equal.

If the algorithm performs at most $q$ group additions and outputs
$k$ labels, then
\begin{equation}\label{eq:count}
    |\mathcal E|\leq M:=3q+k+1.
\end{equation}
Indeed, each addition registers at most two operand labels and
one result label. The output introduces at most $k$ further labels,
and the identity contributes one more.

The algorithm may also compute with or compare other labels.
These labels need no assignment until they are used as operands
or included in the output, since such local computations do not
use the group operation. We use the simulation when $M\leq n$,
which guarantees that an unused label is available whenever
one is needed.

\subsection{Comparing the actual and simulated executions}

Since $\phi:L\to H$ is a bijection, distinct labels represent
distinct group elements: if $x\ne y$, then $\phi(x)\ne\phi(y)$.  In contrast, the independent elements
$h_1,h_2,\ldots$ used to evaluate the formal symbols may repeat.
Thus, the actual execution and the symbolic simulation are not
identical experiments.

We will generate the two executions together, while preserving
the correct distribution of each one separately. Such a joint
construction is called a \emph{coupling}. Our construction keeps
the two executions in agreement unless a collision occurs among
the evaluated formal expressions.

\begin{lemma}[Coupling]\label{lem:coupling}
Suppose $M\leq n$. The symbolic simulation, together with its
independent random evaluations, can be coupled to an actual
execution on a uniformly random labelling of $H$ so that, whenever
$\Bad$ does not occur, the requested additions, returned labels,
and outputs are identical. Moreover, in this case,
\[
    \phi(\ell)=\eval(D(\ell))
\]
for every registered label $\ell$.
\end{lemma}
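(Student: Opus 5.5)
We will build the coupling step by step, running the simulation and a lazily revealed random labelling side by side. The random objects are the fixed algorithm coins, the uniform choices of fresh labels in rule 2, and the values $h_1,h_2,\ldots$. From them we construct a partial bijection $\phi$ alongside the simulation, and at the end extend it uniformly at random to a full bijection with $\phi(\iota)=0$. Each $h_j$ is revealed at the moment symbol $X_j$ is introduced. This is legitimate because the $h_j$ are independent of the label choices and the evaluation of a registered expression uses only symbols already introduced. We maintain the invariant that, as long as no collision has occurred among registered expressions, $\phi$ is defined exactly on the registered labels, $\phi(\ell)=\eval(D(\ell))$, and the actual and simulated transcripts agree.

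The induction has three kinds of events. When an unregistered label $\ell$ first becomes an operand or output, it receives the fresh symbol $X_j$, and we set $\phi(\ell)=h_j$. If $h_j$ equals $\eval(u)$ for some already registered $u$, a collision has occurred and we stop caring about agreement. Otherwise $h_j$ is uniform over the elements not yet assigned, conditioned on no collision. This matches the actual experiment: given the revealed part of a uniform bijection, the value of a new label is uniform over the unused elements.

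For an addition, the actual answer is $\phi^{-1}(\phi(x)+\phi(y))=\phi^{-1}(\eval(D(x)+D(y)))$, using that evaluation preserves addition. There are two cases for the sum $s=D(x)+D(y)$. If $s$ is already registered to some label $z$, both executions return $z$. If $s$ is new, the simulation picks a uniform unregistered label $z$. On the no-collision event, $\eval(s)$ differs from all registered values, so in the actual experiment $\phi^{-1}(\eval(s))$ is a label not yet revealed. By symmetry of the uniform bijection, that label is uniform among unrevealed labels, which are exactly the unregistered ones. So we couple by declaring $\phi(z)=\eval(s)$ for the simulation's chosen $z$. The hypothesis $M\le n$ guarantees that unregistered labels and unused elements always exist.

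The main obstacle is making the marginal of $\phi$ exactly uniform while the $h_j$ are independent and may repeat. To handle it, I would define the actual execution on the collision event separately. At the first step where a collision would occur, we discard the $h$-based assignment and continue by sampling the remaining values of $\phi$ from the correct conditional distribution, namely uniform over unused elements, independently. Before that step, each increment of $\phi$ has the right conditional law given no collision. After the discard, the conditional law is correct by construction. A short calculation then confirms that the induced $\phi$ is a uniform bijection with $\phi(\iota)=0$.

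The delicate point in that calculation is that we condition on $h_j$ avoiding used elements. I would resolve it by resampling: whenever $h_j$ hits a used element, the actual value of $\phi(\ell)$ is taken to be a fresh uniform unused element. This preserves the marginal, and such a step lies inside $\Bad$. On $\neg\Bad$ the invariant carries through to termination, and it yields identical requests, answers and outputs together with $\phi(\ell)=\eval(D(\ell))$ for every registered label $\ell$.
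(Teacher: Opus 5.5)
Your proposal is correct and follows essentially the same route as the paper: you reveal $\phi$ and the $h_j$ lazily, set $\phi(\ell)=h_j$ for a fresh label and resample uniformly from the unused elements when $h_j$ hits an assigned value (the paper's marginal check is $\frac1n+\frac bn\cdot\frac1{n-b}=\frac1{n-b}$), identify the simulation's fresh answer label $z$ with $\phi^{-1}(\eval(s))$, and note that every first divergence is a collision in $\mathcal E$. The one point to tidy is your opening appeal to ``$h_j$ is uniform over unassigned elements, conditioned on no collision,'' which by itself would not give the right marginal; the resampling rule you adopt afterwards is what makes the argument work, exactly as in the paper.
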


\begin{proof}
We construct the random bijection $\phi$ gradually, assigning a
group element to a label only when needed. Initially,
$\phi(\iota)=0$. At every stage, all bijections extending the
assignments already made must remain equally likely.

For this proof, we may also choose $h_j$ when the formal symbol
$X_j$ is first introduced. This gives the same symbolic experiment
as choosing all the $h_j$ after the simulation: each $h_j$ is chosen
independently and uniformly from $H$, and the symbolic simulation
never uses these values to choose its answers.

We try to keep both executions identical and to maintain
$\phi(\ell)=\eval(D(\ell))$ for every registered label.
We declare a divergence if these conditions can no longer be
maintained.

\paragraph{A new operand or output label.}
Suppose the executions still agree, and $\phi$ has been assigned
on $b$ labels. Let $\ell$ be a new label chosen by the algorithm
as an operand or included in its output. In a uniformly random
bijection, $\phi(\ell)$ must be uniform among the $n-b$ elements
of $H$ that have not yet been assigned to labels.

The simulation assigns the next unused symbol $X_j$ to $\ell$.
Choose its value $h_j$ independently and uniformly from all of $H$.
If $h_j$ has not yet been assigned to a label, set
$\phi(\ell)=h_j$. Otherwise, declare divergence and choose
$\phi(\ell)$ uniformly among the unused elements of $H$.

This rule gives the correct distribution for $\phi(\ell)$.
Indeed, for any unused element $a\in H$, conditional on the
preceding history, the probability that $\phi(\ell)=a$ is
\[
    \frac1n+\frac bn\frac1{n-b}
    =\frac1{n-b}.
\]
The first term corresponds to choosing $h_j=a$. The second
corresponds to choosing an already assigned element and then
choosing $a$ as the replacement.

If divergence occurs, $h_j$ equals the value of an earlier
registered expression. That expression is different from the
new symbol $X_j$, so this gives a collision in the symbolic
simulation. This argument also applies to output labels that
were not used earlier.

\paragraph{An addition answer.}
Suppose the executions still agree and the algorithm requests
the addition of labels $x$ and $y$. Let
    $w=D(x)+D(y)$.
Because evaluation preserves addition, the actual sum has value
$\phi(x)+\phi(y)=\eval(w)$.
If $w$ already has a registered label, both executions return
that label.
Otherwise, the simulation chooses a uniformly random unregistered
label $z$ and assigns $D(z)=w$. This choice is made without using
$\eval(w)$.

If $\eval(w)$ has not yet been assigned to a label in the actual
execution, set $\phi(z)=\eval(w)$ and return $z$ there as well.
This preserves the correct distribution of the random bijection:
the label of a specified unused group element is uniform among
the unregistered labels.

If $\eval(w)$ is already assigned to a label, the actual execution
returns that existing label, whereas the simulation returns the
new label $z$. Declare divergence. The expression associated
with the existing label is different from $w$, but has the same
evaluation. Thus, a collision has occurred.

\paragraph{Completing the executions.}
After a divergence, continue the actual execution by extending
its partially assigned bijection uniformly as needed. Continue
the symbolic simulation using its original rules, choosing new
answer labels without consulting the values $h_j$.

Each execution therefore retains its required distribution.
Every first divergence produces a collision in the symbolic
simulation. Consequently, if $\Bad$ does not occur, the executions
and outputs agree, and
$\phi(\ell)=\eval(D(\ell))$ holds for every registered label.
Complete $\phi$ uniformly on any remaining labels.

Throughout this construction, the symbolic experiment retains
its original distribution; we never condition it on the absence
of collisions.
\end{proof}

The lemma reduces the analysis of the actual algorithm to two
questions: how likely a collision is, and how likely the
evaluated output of the symbolic simulation is to form a basis.
We answer these questions next for binary vector spaces.
\color{black}

\section{The square-root basis lower bound}\label{sec:basis-lower}
Throughout this section, $V=\F_2^r$ and $n=2^r$.
Let $\mathcal D_r$ be the distribution on labelled copies of $V$
obtained by choosing a bijection $\phi:L\to V$ uniformly at random,
subject to $\phi(\iota)=0$.

We use the symbolic simulation of Section~\ref{sec:coupling},
with coefficients modulo two. For now, fix the algorithm's random
choices. Let $T$ record the complete simulated execution: the
requested additions, the returned labels, the assignments of
formal expressions to labels, and the output.

As before, $X_1,X_2,\ldots$ are formal symbols. After the simulation
is complete, we assign independent uniform vectors
$h_1,h_2,\ldots\in V$ to these symbols and evaluate the formal
expressions. All lower bounds below count group additions alone
and therefore also apply to the total running time.

\subsection{Collisions and the simulated output}

The algorithm may choose each addition based on the answers to
earlier additions. Nevertheless, the symbolic simulation does
not use the vectors $h_j$ to determine its answers. Thus, after
fixing $T$, all the formal expressions are fixed, while the
vectors $h_j$ remain independent and uniform in $V$.

\begin{lemma}[Unconditional collision bound]\label{lem:collision}
If the symbolic simulation registers at most $M$ distinct formal
expressions, then
\[
    \Prb[\Bad]\leq\frac{\binom M2}{n}.
\]
\end{lemma}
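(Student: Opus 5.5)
We condition on the complete simulated transcript $T$ and then average over $T$. Once $T$ is fixed, the set $\mathcal E$ of registered expressions is fixed. By assumption $|\mathcal E|\le M$. The simulation never consulted the evaluation vectors, so conditional on $T$ the vectors $h_1,h_2,\ldots$ are still independent and uniform in $V=\F_2^r$. It therefore suffices to show that, for every fixed $T$,
\[
  \Prb\bigl[\Bad\mid T\bigr]\le \binom{M}{2}/n .
\]
Taking expectation over $T$ and over the algorithm's random choices then gives the lemma.

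With $T$ fixed, we apply a union bound over the at most $\binom M2$ unordered pairs $\{u,v\}$ of distinct expressions in $\mathcal E$. For each pair we set $w=u-v$. This is a nonzero formal expression $\sum_j c_jX_j$ with coefficients in $\Z_2$. Since evaluation preserves addition, $\eval(u)=\eval(v)$ holds exactly when $\eval(w)=\sum_j c_jh_j=0$. Because $w\ne0$, some index $j_0$ has $c_{j_0}=1$. We condition on all $h_j$ with $j\ne j_0$. The event then becomes $h_{j_0}=-\sum_{j\ne j_0}c_jh_j$, which asks that a uniform vector of $V$ equal one fixed vector. Its probability is exactly $1/|V|=1/n$.

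Summing over the pairs gives $\Prb[\Bad\mid T]\le\binom M2/n$, which finishes the argument.

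The only delicate point is the adaptivity. The algorithm's choices, and hence the expressions themselves, depend on the returned labels, so one might worry that the $h_j$ are correlated with which expressions appear. This is handled by noting that $T$ is generated entirely from the label randomness and the algorithm's coins, independently of the $h_j$; this is stated in Section~\ref{sec:coupling}. Conditioning on $T$ therefore leaves the $h_j$ i.i.d.\ uniform, and the per-pair calculation is valid.
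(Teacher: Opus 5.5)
Your proof is correct and follows the paper's argument: you fix the transcript $T$, use a symbol with coefficient one in $u-v$ to get collision probability exactly $1/n$ for each pair, apply the union bound over at most $\binom M2$ pairs, and average over $T$. Your explanation of why conditioning on $T$ leaves the $h_j$ independent and uniform, since the simulation never consults them, is the same point the paper relies on.
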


\begin{proof}
Fix the complete record $T$ of the symbolic simulation.
Let $u$ and $v$ be two distinct registered expressions.
Since their coefficient lists differ, we can write
    $u-v=\sum_j c_jX_j$
with at least one coefficient equal to one. Choose an index $t$
such that $c_t=1$.
If we fix all the vectors $h_j$ except $h_t$, then
\[
    \eval(u-v)=h_t+\sum_{j\ne t}c_jh_j.
\]
The second term is fixed, and $h_t$ is uniform in $V$. Adding a
fixed vector to a uniform vector gives another uniform vector.
Consequently,
$\Prb[\eval(u)=\eval(v)\mid T]=1/n$. There are at most $\binom M2$ pairs of distinct registered
expressions. By the union bound,
\[
    \Prb[\Bad\mid T]\leq\frac{\binom M2}{n}.
\]
This holds for every possible $T$. Averaging over $T$ proves
the stated bound. We have not conditioned on the absence of
collisions.
\end{proof}
We next examine the probability that the evaluated output is a
basis. First consider $r$ independent uniform vectors in $V$.
After $i$ linearly independent vectors have been chosen, their
span contains $2^i$ of the $2^r$ vectors in $V$. The next uniform
vector lies outside that span with probability $1-2^{i-r}$.
Therefore, the probability that all $r$ vectors form a basis is
\begin{equation}\label{eq:rho}
    \rho_r
    :=\prod_{i=0}^{r-1}(1-2^{i-r})
    =\prod_{j=1}^r(1-2^{-j}).
\end{equation}

The following lemma shows that choosing formal sums without
using their values in $V$ cannot increase this probability.

\begin{lemma}[Simulated output probability]\label{lem:ideal-output}
The evaluations of the $r$ output expressions in the symbolic
simulation form a basis of $V$ with probability at most $\rho_r$.
\end{lemma}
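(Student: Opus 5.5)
Fix the complete simulated record $T$, so that the $r$ output expressions $u_1,\dots,u_r$ are fixed formal sums over $\F_2$ in the symbols $X_1,\dots,X_m$. The vectors $h_1,\dots,h_m$ are chosen independently and uniformly after $T$ is fixed, and the simulation never consults them. It therefore suffices to prove, for every fixed $T$,
\[
\Prb\bigl[\eval(u_1),\dots,\eval(u_r)\text{ form a basis of }V \bigm| T\bigr]\le\rho_r ,
\]
and then average over $T$ and over the algorithm's random choices.

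\textbf{Linear-algebra reformulation.} Regard each $u_i$ as a coefficient vector in $\F_2^m$, and let $C$ be the $r\times m$ matrix with these rows. Let $Y$ be the $m\times r$ matrix whose rows are $h_1,\dots,h_m$, so that $Y$ is uniform over all $m\times r$ binary matrices. The evaluated outputs are the rows of $CY$. If the $u_i$ are linearly dependent over $\F_2$, then so are their evaluations, and the probability of a basis is $0\le\rho_r$. Otherwise $C$ has rank $r$. Extend $C$ to an invertible $m\times m$ matrix $A$ whose first $r$ rows are the rows of $C$.

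\textbf{Reduction to uniform vectors.} Since $A$ is invertible and $Y$ is uniform, $AY$ is also uniform over $m\times r$ matrices. In particular, its first $r$ rows, namely $CY$, are $r$ independent uniform vectors in $V$. By the computation preceding \eqref{eq:rho}, these form a basis with probability exactly $\rho_r$. Hence the conditional probability is at most $\rho_r$ in every case.

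The only point needing care is the claim that the $h_j$ are independent of $T$. This follows from the construction in the symbolic simulation: answer labels are chosen uniformly among unregistered labels without using any $h_j$, and the $h_j$ are drawn only after the simulation ends. Also, output labels that were registered earlier may repeat an expression, so the $u_i$ need not be distinct; this case is already covered by the dependent case. I expect no real obstacle beyond stating this independence precisely.
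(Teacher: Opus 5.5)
Your proof is correct and follows the paper's argument. You fix $T$, dispose of the rank-deficient case, and show that a full-rank coefficient matrix makes the evaluated outputs independent and uniform, so the basis probability is exactly $\rho_r$. The only difference is cosmetic: you extend $C$ to an invertible $m\times m$ matrix and use that $Y\mapsto AY$ is a bijection, whereas the paper picks an invertible $r\times r$ minor and solves uniquely for the corresponding $h_j$ after fixing the others.
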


\begin{proof}
Fix $T$, and suppose the simulation introduced the symbols
$X_1,\ldots,X_t$. Write the output expressions as
\[
    u_i=\sum_{j=1}^t a_{ij}X_j,
    \qquad i=1,\ldots,r,
\]
and let $A=(a_{ij})$ be their $r\times t$ coefficient matrix
over $\F_2$. Since $T$ is fixed, $A$ is fixed independently of
the random vectors $h_1,\ldots,h_t$.
If the row rank of $A$ is less than $r$, some nonzero linear
combination of its rows is zero. The same combination of the
evaluated vectors $\eval(u_1),\ldots,\eval(u_r)$ is then zero.
These vectors are linearly dependent and cannot form a basis.

Suppose instead that $A$ has row rank $r$. Choose $r$ columns
forming an invertible $r\times r$ matrix. Fix the vectors $h_j$
corresponding to all the other columns.
For any prescribed output tuple $(y_1,\ldots,y_r)\in V^r$,
the equations
\[
    \sum_{j=1}^t a_{ij}h_j=y_i,
    \qquad i=1,\ldots,r,
\]
now have exactly one solution for the $r$ remaining vectors
$h_j$. Indeed, we can solve the invertible linear system
separately in each coordinate of $V$.
The $r$ remaining vectors are independent and uniform in $V$.
Thus, every output tuple in $V^r$ has probability $n^{-r}$.
The evaluated output vectors are therefore independent and
uniform in $V$, and their probability of forming a basis is
exactly $\rho_r$.

For every fixed $T$, the basis probability is consequently
either zero or $\rho_r$. Averaging over $T$ proves the lemma.
\end{proof}

Both lemmas use the original independent choices of the vectors
$h_j$. We must not first condition on the absence of collisions.
For example, two independent uniform vectors in $\F_2^2$ form
a basis with probability $3/8$. If we condition on their being
distinct and nonzero, they form a basis with probability one.
Thus, the conclusion of Lemma~\ref{lem:ideal-output} need not
hold after such conditioning.

\subsection{From the simulation to an addition lower bound}

By Lemma~\ref{lem:coupling}, whenever $\Bad$ does not occur,
the actual and simulated executions have the same output labels,
and those labels represent the evaluated output expressions.
Therefore, if the actual algorithm succeeds, either the evaluated
simulated output forms a basis or $\Bad$ occurs. Indeed, let $A$ be the event that the actual output is a basis,
let $S$ be the event that the evaluated simulated output is a
basis, and put $B=\Bad$. Since the two outputs represent the
same group elements whenever $B$ does not occur,
\[
\begin{aligned}
A
=(A\cap B^c)\cup(A\cap B)=(S\cap B^c)\cup(A\cap B)\subseteq S\cup B.
\end{aligned}
\]
Thus, actual success implies that either the evaluated simulated
output is a basis or $\Bad$ occurs. Taking probabilities gives
$\Prb[A]\leq \Prb[S]+\Prb[\Bad]$.
We bound these
two probabilities separately. 

Recall that $\mathcal D_r$ is the distribution on labelled copies of $V$
obtained by choosing a bijection $\phi:L\to V$ uniformly at random,
subject to $\phi(\iota)=0$.

\begin{theorem}[Quantitative success bound]\label{thm:quantitative}
Let an algorithm perform at most $q$ group additions and output
$r$ labels. If $M=3q+r+1\leq n=2^r$, then
\begin{equation}\label{eq:success}
    \Prb_{G\sim\mathcal D_r}
    [\text{the output is a basis}]
    \leq
    \rho_r+\frac{\binom{3q+r+1}{2}}{n}.
\end{equation}
For a randomized algorithm, the probability also includes its
internal random choices, and the bound of $q$ additions must
hold on every execution.
\end{theorem}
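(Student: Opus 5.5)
The plan is to first fix the algorithm's internal random choices, so that it becomes a deterministic decision tree performing at most $q$ additions and outputting $r$ labels. Its success probability over $G\sim\mathcal D_r$ is then bounded, and finally we average over the internal randomness. Since the bound of $q$ additions holds on every execution, it also holds on every branch of the fixed tree, including branches reached in the symbolic simulation on answer sequences that need not come from a genuine group. This is exactly why the paper defines the tree's behavior arbitrarily on such sequences and truncates it after $q$ additions.

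For the fixed deterministic tree, we run the symbolic simulation of Section~\ref{sec:coupling} with $d=2$ and $H=V$. By \eqref{eq:count} with $k=r$, the set $\mathcal E$ of registered expressions has at most $M=3q+r+1$ elements. Because $M\le n$ by hypothesis, Lemma~\ref{lem:coupling} applies and couples the simulation to an actual execution on a uniform labelling distributed as $\mathcal D_r$. On the complement of $\Bad$, the two executions produce the same output labels, and each output label $\ell$ satisfies $\phi(\ell)=\eval(D(\ell))$. Let $A$ be the event that the actual output is a basis and $S$ the event that the evaluated simulated output is a basis. The containment $A\subseteq S\cup\Bad$ derived just before the theorem then gives $\Prb[A]\le\Prb[S]+\Prb[\Bad]$.

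Next we bound the two terms. Lemma~\ref{lem:ideal-output} gives $\Prb[S]\le\rho_r$. Lemma~\ref{lem:collision}, applied with at most $M$ registered expressions, gives $\Prb[\Bad]\le\binom{M}{2}/n$, since $\binom{m}{2}$ is monotone in $m$. Both lemmas are unconditional, and the coupling preserves the marginal distribution of the symbolic experiment, so the two bounds can simply be added. Summing yields \eqref{eq:success} for the deterministic tree. Averaging over the internal random choices preserves the bound, because the right-hand side does not depend on them. This proves the randomized statement, with the probability taken jointly over the labelling and the algorithm's coins.

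The main obstacle is ensuring that the quantity $\Prb[A]$ is computed under the right distribution. The coupling must produce $\phi$ with exactly the distribution $\mathcal D_r$, and this holds by Lemma~\ref{lem:coupling}. A second subtlety is that output labels never used as operands still receive fresh symbols under rule~3, and these are counted in $\mathcal E$ and in $M$. Beyond these checks, the proof is an assembly of the earlier lemmas.
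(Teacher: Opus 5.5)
Your proposal is correct and follows the paper's proof essentially step for step. You fix the coins, use the coupling of Lemma~\ref{lem:coupling} to get $A\subseteq S\cup\Bad$, bound $\Prb[S]$ by Lemma~\ref{lem:ideal-output} and $\Prb[\Bad]$ by Lemma~\ref{lem:collision} with $|\mathcal E|\le 3q+r+1$ from \eqref{eq:count}, and then average over the coins; your remarks on non-genuine branches and on output labels registered by rule~3 are bookkeeping the paper handles in its setup.
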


\begin{proof}
First fix the algorithm's random choices. Use the joint
construction from Lemma~\ref{lem:coupling}. 
By~\eqref{eq:count}, the simulation registers at most
$3q+r+1$ distinct expressions. Lemma~\ref{lem:ideal-output}
bounds the first probability by $\rho_r$, and
Lemma~\ref{lem:collision} bounds the second by
$\binom{3q+r+1}{2}/n$. This proves~\eqref{eq:success}
for every fixed choice of the algorithm's random bits.
Averaging over those choices proves the randomized statement.
\end{proof}

\begin{proof}[Proof of Theorem~\ref{thm:main}]
For $r\geq2$, equation~\eqref{eq:rho} gives
$\rho_r\leq\rho_2=3/8$.
An algorithm that succeeds with probability at least $2/3$
on every labelling also has average success probability at
least $2/3$ under $\mathcal D_r$.
Suppose first that $3q+r+1\leq n$. By
Theorem~\ref{thm:quantitative},
\[
    \frac{\binom{3q+r+1}{2}}{n}
    \geq \frac23-\frac38
    =\frac7{24}.
\]
Using $\binom M2\leq M^2/2$, we obtain
    $q
    =\Omega(\sqrt n)$.

In the remaining case, $3q+r+1>n$, so
    $q=\Omega(n)$,
which is stronger. Thus, the algorithm requires
$\Omega(\sqrt n)$ group additions in the worst case.
\end{proof}

\color{black}

\section{A fourth-root algorithm for isomorphism type}\label{sec:upper}
Computing the isomorphism type requires only the orders
of the cyclic groups in a direct-product decomposition. For each
prime $p$, we recover this information for the $p$-primary subgroup
$P$ from the structure of $pP$ and the known order of $P$.
In particular, we determine the number of cyclic groups of order
$p$ without finding elements that generate them.

We first explain this reconstruction. We then state the existing
basis-construction algorithm that we use to determine the structure
of $pP$. Finally, we show how to obtain a small random list of
generators for $pP$, as required by that algorithm.

\subsection{Recovering a component from its image}

Return to Example~\ref{ex:type}. If $P$ has type $(3,2,1,1)$, then $2P$
has type $(2,1)$. Increasing these surviving exponents by one recovers
$(3,2)$. Since $|P|=2^7$, the missing exponents must sum to
$7-(3+2)=2$. Every missing exponent was one, so there are exactly two
missing factors. This reasoning works for every prime.

\begin{lemma}[Reconstruction and rank]\label{lem:reconstruct}
Let $|P|=p^a$, and suppose the type of $pP$ is
$(\mu_1,\ldots,\mu_s)$, with each $\mu_i\geq1$. Then
\[
 P\cong\left(\bigoplus_{i=1}^s\Z_{p^{\mu_i+1}}\right)
       \oplus\Z_p^{\,c},
\]
where the number of additional order-$p$ factors is
\begin{equation}\label{eq:reconstruct}
 c=a-\sum_{i=1}^s(\mu_i+1).
\end{equation}
Moreover, $s=\rank(pP)\leq\lfloor a/2\rfloor$.
\end{lemma}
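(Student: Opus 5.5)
The plan is to fix a cyclic decomposition of $P$ from Theorem~\ref{thm:structure}, compute $pP$ coordinatewise, and then read off the exponents using the uniqueness part of that theorem. Write
\[
 P\cong\bigoplus_{i=1}^{t}\Z_{p^{\lambda_i}},
 \qquad \lambda_1\geq\cdots\geq\lambda_t\geq1,
 \qquad \sum_{i=1}^t\lambda_i=a.
\]
Multiplication by $p$ acts coordinatewise, so $pP\cong\bigoplus_i p\Z_{p^{\lambda_i}}$. In a cyclic group $\Z_{p^{\lambda}}$, the subgroup $p\Z_{p^\lambda}$ is generated by $p$ and has order $p^{\lambda-1}$. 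Hence it is cyclic of order $p^{\lambda-1}$: trivial when $\lambda=1$, and $\Z_{p^{\lambda-1}}$ when $\lambda\geq2$.

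It follows that $pP\cong\bigoplus_{i:\lambda_i\geq2}\Z_{p^{\lambda_i-1}}$. This is a decomposition into nontrivial cyclic $p$-groups, so by the uniqueness in Theorem~\ref{thm:structure} its sorted exponent list is the type $(\mu_1,\ldots,\mu_s)$. Concretely, $s$ is the number of indices with $\lambda_i\geq2$, and after sorting (the $\lambda_i$ are already nonincreasing) $\mu_i=\lambda_i-1$ for $i\leq s$.

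The remaining $t-s$ exponents all equal $1$. Let $c=t-s$. Then
\[
 a=\sum_{i\leq s}(\mu_i+1)+c,
\]
which is exactly \eqref{eq:reconstruct}, and it gives the claimed isomorphism for $P$.

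For the rank statement, $s$ is the number of nontrivial cyclic factors of $pP$, so it equals $\rank(pP)$ by the definition in Section~\ref{S22}. Since $\mu_i+1\geq2$ and $c\geq0$, we get $a\geq 2s$, and so $s\leq\lfloor a/2\rfloor$.

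The only point needing care is applying uniqueness correctly. The type of $pP$ is defined intrinsically, so we must note that the decomposition obtained from a chosen decomposition of $P$ is one legitimate decomposition of $pP$ into nontrivial prime-power cyclic factors. Uniqueness then forces its sorted exponents to equal $(\mu_i)$, which justifies matching $\mu_i$ with $\lambda_i-1$. With that observation the remaining steps are straightforward bookkeeping.
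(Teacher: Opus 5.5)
Your proof is correct and follows the paper's argument: decompose $P$ via Theorem~\ref{thm:structure}, apply multiplication by $p$ factor by factor, identify the surviving factors with the $\mu_i$, count the remaining order-$p$ factors using $|P|=p^a$, and conclude $2s\leq a$. Your explicit appeal to uniqueness to justify $\mu_i=\lambda_i-1$ spells out a step that the paper leaves implicit.
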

\begin{proof}
By Theorem~\ref{thm:structure}, $P$ is a direct sum of cyclic groups of
orders $p^\lambda$. Multiplication by $p$ acts separately on each factor.
If $\lambda\geq2$, its image is cyclic of order $p^{\lambda-1}$; if
$\lambda=1$, its image is trivial. Thus every nontrivial factor of $pP$
corresponds to exactly one factor of $P$ with exponent $\mu_i+1$.
All remaining factors of $P$ have order $p$.

Orders multiply in a direct sum, so the sum of the exponents of all factors
of $P$ equals $a$. Subtracting the known exponents proves
\eqref{eq:reconstruct}. Finally each of the $s$ surviving factors uses at
least two powers of $p$ in $|P|$. Therefore $2s\leq a$.
\end{proof}
The useful bound is on the \emph{rank} of $pP$, not on its order. For
example, if $P=\Z_{p^a}$, then $|pP|=p^{a-1}$, which can be much larger
than $\sqrt{|P|}$. A generic bound depending only on $\sqrt{|pP|}$ would
therefore be insufficient. We need the dependence on rank in the next
proposition.
\color{black}

\subsection{The existing rank-sensitive basis routine}
Given a basis $b_1,\ldots,b_s$, where $b_i$ has order $m_i$,
every group element $g$ can be written uniquely as
$g=x_1b_1+\cdots+x_sb_s$, $0\leq x_i<m_i$.
The integers $x_1,\ldots,x_s$ are called the {\it coordinates} of $g$
in this basis. Given $g$ and the basis, finding these integers
is called the \emph{generalized discrete-logarithm problem}.

Sutherland's
algorithms~\cite{Sutherland11} use this task to build a basis from generators. We use their
published complexity bounds as an external algorithmic theorem; the
remaining reduction is proved here.

\begin{proposition}[Consequence of Sutherland's algorithms ~\cite{Sutherland11}]\label{prop:sutherland}
Given a prime $p$ and a finite list~$S$ generating an Abelian $p$-group $K$
of rank $s$, a basis and its element orders can be computed deterministically
in
\[
 p^{s/2}\operatorname{poly}(|S|,\log n)
\]
time, if $|K|\leq n$ and group operations take unit time. Neither $|K|$ nor $s$
needs to be given as input.
\end{proposition}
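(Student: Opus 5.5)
Proposition~\ref{prop:sutherland} is not proved from scratch. The plan is to derive it from the published complexity statements in~\cite{Sutherland11}, checking only that their hypotheses and cost accounting fit our setting.

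\textbf{Step 1: the core routine.} Sutherland's basic procedure processes the generators $g\in S$ one at a time and maintains a basis $B$ of $\langle g_1,\dots,g_i\rangle$. For each new generator $g$ it does two things.
\begin{itemize}
\item It computes $\operatorname{ord}(g)$. Since $K$ is a $p$-group, the order is a power of $p$. It is found by repeated multiplication by $p$, using $O(\log_p n)$ exponentiations. Each exponentiation costs $O(\log n)$ group operations by double-and-add.
\item It finds the least $p^k$ such that $p^kg\in\langle B\rangle$, and solves the generalized discrete-logarithm problem for $p^kg$ in the current basis. It then applies the extension step, essentially Teske's or Sutherland's ``vector form'' reduction, to obtain a basis of $\langle B,g\rangle$.
\end{itemize}

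\textbf{Step 2: cost of the discrete logarithms.} This step is the main cost and the main obstacle. In a $p$-group of rank at most $s$, one can use a Pohlig--Hellman style descent: multiply by powers of $p$ to reduce to the elementary subgroup. There, each discrete-logarithm instance becomes a search over $\F_p^{\,s}$, which baby-step giant-step handles with $O(p^{s/2})$ operations, up to $\operatorname{poly}(\log n)$ factors for the lifting levels.

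The hard part is ensuring that the exponential term is governed by the rank $s$ of $K$ and not by anything larger. Every intermediate subgroup $\langle B\rangle$ is a subgroup of $K$, so its rank is at most $s$ by~\eqref{eq:subgroup-rank}. Hence every baby-step giant-step search lives in an $\F_p$-space of dimension at most $s$. I would cite Sutherland's theorem that the $p$-group basis algorithm runs in $O(|S|\,p^{s/2}\log^{O(1)}|K|)$ group operations, after checking that his exponent is indeed $s/2$ in the rank and not in $\log_p|K|$.

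\textbf{Step 3: accounting and no advance knowledge.} Summing over the $|S|$ generators and adding the $O(|S|\log^2 n)$ operations for order computations gives $p^{s/2}\operatorname{poly}(|S|,\log n)$ group operations. Bookkeeping, such as hash tables or sorting for baby-step giant-step collisions, and integer arithmetic on exponents below $n$ add only polylogarithmic overhead. So the same bound holds for time in the RAM model.

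The algorithm never needs $|K|$ or $s$ in advance. Orders are discovered through the test $p^jg=0$, with $j\le\log_p n$ since $|K|\le n$. The rank at each stage is just the current length of $B$.

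The result is deterministic, because baby-step giant-step and the reduction steps make no random choices. The output is a basis of $\langle S\rangle=K$ together with the element orders, which completes the reduction.
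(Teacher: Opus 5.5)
Your proposal is correct and takes essentially the same route as the paper: both derive the proposition from Sutherland's published complexity statements. The paper quotes the explicit bounds $T_{\rm DL}(K)=O\bigl(\frac{\log(h+1)}{\log\log(h+2)}\log|K|+\frac bs p^{s/2}\bigr)$ and $T_{\rm B}(S)=O\bigl(\log^{2+\epsilon}|K|+(|S|-s+1)T_{\rm DL}(K)\bigr)$ and simplifies them with $\epsilon=1$; this settles the check you defer, namely that the exponent is $s/2$ with $s=\rank(K)$. Your extra remarks (every intermediate subgroup has rank at most $s$ by~\eqref{eq:subgroup-rank}, and non-group bookkeeping adds only polylogarithmic overhead) are sound and make explicit what the paper leaves implicit.
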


\begin{proof} [Proof. Derivation from Sutherland~\cite{Sutherland11}] 
We use Algorithms 1 and 4 and Propositions 1--2 of that paper, including
the discrete-logarithm consequence in its Corollary 1. For nontrivial $K$,
we have $s\geq1$. Write $|K|=p^b$ and let its exponent be $p^h$.
The discrete-logarithm cost is (see Section~\ref{TDLC} in the Appendix)
\[
 T_{\rm DL}(K)=O\!\left(
 \frac{\log(h+1)}{\log\log(h+2)}\log|K|
 +\frac bs p^{s/2}\right).
\]
The cost of constructing a basis of $K$ from the generating
set $S$, also summarized in equation~(3) of that paper,
satisfies
\[
 T_{\rm B}(S)=O\!\left(
 \log^{2+\epsilon}|K|+(|S|-s+1)T_{\rm DL}(K)\right)
\]
for fixed $\epsilon>0$, counting group operations.

Set
 $\epsilon=1$. Since $1\leq b,h,s\leq L$, the first bound
gives $T_{\rm DL}(K)=O(\log^2n+ p^{s/2}\log n)$. Substitution into the second gives
 $T_{\rm B}(S)
 =O\!\left(\log^3n+(|S|+1)(\log^2n+p^{s/2}\log n)\right)$.
This is bounded by $p^{s/2}\operatorname{poly}(|S|,\log n)$, as
claimed. Element
orders can be found by repeatedly multiplying by $p$ until the identity
is reached. 
\end{proof}
\color{black}
\subsection{Obtaining generators by uniform sampling}
Let $G_p$ denote the subgroup of $G$ consisting of elements
whose orders are powers of $p$. To apply the basis-construction
algorithm to $pG_p$, we need a generating list for this subgroup.
We obtain a small such list from random samples of $G$, as
described below, without listing all elements of $pG_p$.

\begin{lemma}[Random generation]\label{lem:generation}
If $K$ is a finite Abelian $p$-group of rank $s$ and $k\geq s$, then $k$
iid uniform samples fail to generate $K$ with probability at most
$p^{s-k}$.
\end{lemma}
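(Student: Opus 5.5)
The plan is to reduce generation of $K$ to spanning the $\F_p$-vector space $K/pK$, and then to bound the probability that $k$ uniform vectors fail to span an $s$-dimensional space.

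\emph{Step 1 (Frattini-type reduction).} Let $x_1,\ldots,x_k$ be the samples and $U=\langle x_1,\ldots,x_k\rangle$. I claim that if the images of the $x_i$ span $K/pK$, then $U=K$. Indeed, spanning means $U+pK=K$. Multiplying by $p$ gives $pU+p^2K=pK$, so $pK\subseteq U+p^2K$. Substituting back yields $K=U+p^2K$. Iterating gives $K=U+p^jK$ for every $j$, and taking $p^j$ to be the exponent of $K$ gives $K=U$. Equivalently, we may decompose $K$ by Theorem~\ref{thm:structure} and argue coordinatewise, but the iteration is cleaner. Hence failure to generate $K$ implies that the images fail to span $K/pK$.

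\emph{Step 2 (uniformity of images).} The quotient map $K\to K/pK$ is a surjective homomorphism, so every fiber is a coset of $pK$ and has the same size. Therefore the images $\bar x_1,\ldots,\bar x_k$ are iid uniform in $K/pK\cong\F_p^s$, where the dimension is $s=\rank(K)$ by Section~\ref{S22}.

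\emph{Step 3 (counting).} The vectors $\bar x_1,\ldots,\bar x_k$ fail to span $\F_p^s$ if and only if they all lie in some hyperplane. Equivalently, there is a nonzero linear functional, determined up to a nonzero scalar, that vanishes on all of them. There are $(p^s-1)/(p-1)$ hyperplanes, and each contains all $k$ vectors with probability $p^{-k}$. The union bound gives a failure probability of at most
\[
\frac{p^s-1}{p-1}\,p^{-k}.
\]
I must check that this is at most $p^{s-k}$. It is, since $(p^s-1)/(p-1)\le p^s-1<p^s$ for $p\ge2$. The case $s=0$ is trivial: $K$ is trivial, and the bound $p^{-k}$ holds vacuously since generation never fails.

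The main subtlety is Step 1, namely justifying that spanning $K/pK$ suffices for generation. The iteration argument handles it without further machinery. Step 3 is routine, and the hypothesis $k\ge s$ is needed only so that the bound is meaningful.
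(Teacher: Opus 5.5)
Your proposal is correct and follows the paper's proof essentially step for step: the same Frattini-type reduction to spanning $K/pK$ (you iterate $K=U+p^jK$ inside $K$, while the paper iterates $K/U=p(K/U)$ in the quotient), the same uniformity of images under the quotient map, and the same union bound over linear functionals vanishing on all samples. The only difference is that you count hyperplanes, $(p^s-1)/(p-1)$ of them, rather than all $p^s-1$ nonzero functionals; this gives a slightly sharper intermediate bound but the same final estimate $p^{s-k}$.
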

\begin{proof}
Let $a_1,\ldots,a_k$ be elements of $K$. Each $a_i$ determines
a coset $a_i+pK$ in the quotient group $K/pK$.
We show that if these cosets span the vector space $K/pK$
over $\F_p$, then $a_1,\ldots,a_k$ generate~$K$.
Let $U$ be the subgroup generated by the list. Spanning the quotient means
$U+pK=K$, so $K/U=p(K/U)$. Iterating gives $K/U=p^h(K/U)$ for every
positive integer $h$. Choose $p^h$ annihilating $K$. Then $K/U=0$, which
means $U=K$. The converse is immediate by taking images in the quotient.

Recall from Section~\ref{S22} that $K/pK\cong\F_p^s$.
Take $k$ independent uniform samples from $K$ and consider
their cosets in $K/pK$. Since every coset has the same number
of elements, these cosets correspond to independent uniform
vectors $y_1,\ldots,y_k$ in $\F_p^s$.
Suppose these vectors do not span $\F_p^s$. We show that there
is a nonzero linear map $f:\F_p^s\to\F_p$ such that
$f(y_i)=0$ for every $i$. Choose a basis $v_1,\ldots,v_t$
of their span, where $t<s$, and extend it to a basis
$v_1,\ldots,v_s$ of $\F_p^s$. Define
$f(a_1v_1+\cdots+a_sv_s)=a_{t+1}$.
Every sampled vector is a linear combination of
$v_1,\ldots,v_t$, so $f$ is zero on every sampled vector.
However, $f(v_{t+1})=1$, so $f$ is not the zero map.

There are $p^s-1$ nonzero linear maps from $\F_p^s$ to $\F_p$:
a linear map is determined by its values on the $s$ basis
vectors, and each value has $p$ choices. For any fixed nonzero
map $f$, exactly $p^{s-1}$ of the $p^s$ vectors satisfy $f(y)=0$.
Thus, a uniform vector satisfies this equality with probability
$1/p$. By independence, all $k$ sampled vectors satisfy it with
probability $p^{-k}$. Since failure to span requires at least
one nonzero map to be zero on all samples, the union bound gives
\[
\Prb[y_1,\ldots,y_k\text{ do not span }\F_p^s]
\leq (p^s-1)p^{-k}
\leq p^{s-k}.
\]
If $s=0$, the quotient is the zero vector space, which is
spanned by every list, so the failure probability is zero.
\end{proof}

We also need to sample the image $pG_p$ without knowing a basis of the input.
Write $n=p^a m$, where $p\nmid m$.
Multiplication by $m$ sends every primary component
corresponding to a prime other than $p$ to zero. One further multiplication by $p$ gives the desired image.

\begin{lemma}[Sampling a primary image]\label{lem:sample}
Write $n=\prod_{p\mid n}p^{a_p}$. For uniform $x\in G$, the element
 $y=({n}/{p^{a_p-1}})x$
is uniform in $pG_p$ and can be computed with $O(\log n)$ table queries.
\end{lemma}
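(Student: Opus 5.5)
The plan is to show that the map $x\mapsto (n/p^{a_p-1})x$ is a surjective group homomorphism from $G$ onto $pG_p$. A surjective homomorphism pushes the uniform distribution on $G$ to the uniform distribution on its image, because every fiber is a coset of the kernel and so all fibers have the same size. Write $m=n/p^{a_p}$, so that $p\nmid m$ and $n/p^{a_p-1}=pm$. Multiplication by any integer is a homomorphism of an Abelian group, so the only real content is the identification of the image.

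By Theorem~\ref{thm:structure}, $G\cong\bigoplus_{q\mid n}G_q$. Multiplication by $pm$ acts on each summand separately.
\begin{itemize}
\item For a prime $q\ne p$, the order of $G_q$ is $q^{a_q}$, which divides $m$. Hence $m$ annihilates $G_q$, and $pm\,G_q=0$.
\item For $q=p$, multiplication by $m$ is an automorphism of $G_p$. Indeed, $\gcd(m,p)=1$ and $G_p$ has exponent a power of $p$, so if $mx=0$ then the order of $x$ divides both $m$ and a power of $p$, forcing $x=0$. An injective map of a finite set to itself is a bijection, so $mG_p=G_p$.
\item Consequently $pm\,G_p=p(mG_p)=pG_p$.
\end{itemize}
Summing over the components, the image of $G$ under multiplication by $pm$ is exactly $pG_p$, so $y$ is uniform in $pG_p$ by the fiber-size remark.

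For the cost, compute $(pm)x$ by double-and-add on the binary expansion of $pm\le n$. This uses $O(\log n)$ group additions, each of which is one table query, plus one comparison with $\iota$ if needed for the identity. The only step requiring any care is the argument that $m$ acts bijectively on $G_p$; the rest is routine.
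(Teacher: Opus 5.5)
Your proposal is correct and follows essentially the same route as the paper's proof. Both proofs kill the other primary components with $m=n/p^{a_p}$, use that multiplication by $m$ is a bijection on $G_p$, deduce uniformity from the equal fiber sizes of a surjective homomorphism, and bound the cost by double-and-add on a scalar at most $n$; the only difference is that you get bijectivity from injectivity, while the paper exhibits an explicit inverse $u$ with $um\equiv 1 \pmod{p^{a_p}}$.
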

\begin{proof}
Set $m_p=n/p^{a_p}$, so that the displayed scalar is $pm_p$.
By Theorem~\ref{thm:structure}, multiplication by $m_p$ kills every primary
component other than $G_p$: the order of each such component divides $m_p$.
On $G_p$, multiplication by $m_p$ is invertible because
$\gcd(m_p,p^{a_p})=1$. Explicitly, an integer $u$ with
$um_p\equiv1\pmod{p^{a_p}}$ gives its inverse. Following this map by
multiplication by $p$ therefore maps $G$ onto $pG_p$.

For any homomorphism, the preimages of a point in its image form a coset
of its kernel. All these preimage sets have equal size, so the image of a
uniform element is uniform. Finally, the scalar $pm_p$ is at most $n$.
Repeated doubling and addition, using its binary expansion, computes its
multiple of $x$ in $O(\log n)$ table queries.
\end{proof}

For example, in $G=\Z_8\oplus\Z_3$, the order is $24$, and the scalar for
$p=2$ is $24/2^2=6$. It kills the $\Z_3$ component and maps $\Z_8$ uniformly
onto its even residues $2\Z_8$. No element of either component needs to be
identified in advance.
\color{black}
\subsection{The algorithm}
We now have all three ingredients: Lemma~\ref{lem:sample} produces uniform
samples, Lemma~\ref{lem:generation} makes a short list sufficient, and
Proposition~\ref{prop:sutherland} processes that list. Lemma~\ref{lem:reconstruct}
then recovers the missing factors using only integers. See Algorithm~\ref{alg:isomorphism-type}.

If $n=1$, return the trivial type. Otherwise factor $n$ and let $t$ be the
number of distinct prime divisors. We can factor $n$ using the randomized algorithm of Lenstra and
Pomerance~\cite{LenstraPomerance92}, whose expected running time
is $n^{o(1)}$; stopping it after twelve times an upper bound
on this expectation gives a running time of $o(n^{1/4})$
with failure probability at most $1/12$. 

For each $p\mid n$, with $p^{a_p}$ the
exact power of $p$ dividing $n$, perform the following steps.
\begin{enumerate}
\item Set
\[
 k_p=\lfloor a_p/2\rfloor+\lceil\log t\rceil+3.
\]
Draw $k_p$ independent uniform labels $x\in G$. For each one, compute
$y=(n/p^{a_p-1})x$ as in Lemma~\ref{lem:sample}. Let $S_p$ be this list
of image elements.
\item Apply Proposition~\ref{prop:sutherland} to $S_p$. It computes a basis
of the subgroup $K_p=\langle S_p\rangle$ and the orders
$p^{\mu_1},\ldots,p^{\mu_s}$ of its nonidentity basis elements.
\item Form the list $\mu_1+1,\ldots,\mu_s+1$, and append
\[
 c_p=a_p-\sum_{i=1}^s(\mu_i+1)
\]
copies of $1$. Sort this list in nonincreasing order and use it as the
$p$-primary type. If $c_p<0$, report failure.
\end{enumerate}
Return the collection of lists, one for each prime. For an empty basis,
the sum in the last step is zero. An explicit failure or an incorrect list
counts as an unsuccessful run. 

\begin{algorithm}[H]
\caption{Computing the isomorphism type of a finite Abelian group}
\label{alg:isomorphism-type}
\begin{algorithmic}[1]
\Require A finite Abelian group $G$ of known order $n>1$,
with access to group addition and uniform sampling.
\Ensure A collection of primary exponent lists, or \textsc{Failure}.

\State Factor $n=\prod_{p\in\mathcal P}p^{a_p}$,
where $\mathcal P$ is the set of distinct prime divisors of $n$.
\State $t\gets|\mathcal P|$
\State $\mathcal T\gets\varnothing$

\For{each $p\in\mathcal P$}
    \State $k_p\gets
    \lfloor a_p/2\rfloor+\lceil\log_2 t\rceil+3$
    \State $S_p\gets$ an empty list

    \For{$j=1,\ldots,k_p$}
        \State Choose a uniform random label $x\in G$,
        independently of previous choices.
        \State $y\gets(n/p^{a_p-1})x$
        \Comment{Lemma~\ref{lem:sample}}
        \State Append $y$ to $S_p$.
    \EndFor

    \State Use Proposition~\ref{prop:sutherland}
    to compute a basis of $\langle S_p\rangle$.
    \State Let $p^{\mu_1},\ldots,p^{\mu_s}$
    be the orders of its nonidentity basis elements.

    \State $c_p\gets
    a_p-\sum_{i=1}^{s}(\mu_i+1)$
    \State $L_p\gets[\mu_1+1,\ldots,\mu_s+1]$
    \State Append $c_p$ copies of $1$ to $L_p$.
    \State Sort $L_p$ in nonincreasing order.
    \State Insert $(p,L_p)$ into $\mathcal T$.
\EndFor

\State \Return $\mathcal T$
\end{algorithmic}
\end{algorithm}

 In the Algorithm, for an empty basis, we take $s=0$ and the sum defining $c_p$
to be zero. Each returned list $L_p$ specifies the exponents
of the cyclic groups in the proposed $p$-primary decomposition.

\color{black}
\subsection{Correctness and running time}
\begin{proof}[Proof of Theorem~\ref{thm:iso}]
We first show that all sampled subgroups equal their intended images with
high probability. We then bound the cost on every run, including runs on
which some image is not generated.

\paragraph{Correctness.}
Put $s_p=\rank(pG_p)$. By Lemma~\ref{lem:reconstruct},
$s_p\leq\lfloor a_p/2\rfloor$. Lemmas~\ref{lem:sample}
and~\ref{lem:generation} give
\[
 \Prb[K_p\neq pG_p]\leq p^{s_p-k_p}
 \leq2^{-\lceil\log_2 t\rceil+3}\leq\frac1{8t}.
\]
A union bound over the $t$ primes shows that all images are generated with
probability at least $5/6$. On this event, the basis orders returned by
Proposition~\ref{prop:sutherland} describe each whole image. By
Lemma~\ref{lem:reconstruct}, and specifically~\eqref{eq:reconstruct}, the
reconstructed primary types are correct. By Theorem~\ref{thm:structure},
their collection is the type of $G$.

For two groups of equal order, run this procedure on each and compare the
lists. The chance that either type is incorrect is at most $1/3$, again
by a union bound. When both are correct, their comparison decides
isomorphism by Theorem~\ref{thm:structure}.

\paragraph{Cost of the group computations.}
Since $\sum_{p\mid n}a_p\leq\log n$ and $t\leq\log n$, the total
number of samples is
$O(\log n+t\log(t+1))$, which is polynomial in $\log n$.
Lemma~\ref{lem:sample} bounds the cost of forming each image sample.
Even on an unsuccessful run, $K_p$ is a subgroup of $pG_p$.
Equation~\eqref{eq:subgroup-rank} therefore gives
$\rank(K_p)\leq s_p$. By Proposition~\ref{prop:sutherland}, the total
basis-subroutine cost is at most
\[
 \widetilde O\!\left(\sum_{p\mid n}p^{s_p/2}\right)
 \leq\widetilde O\!\left(\sum_{p\mid n}p^{a_p/4}\right).
\]
Every summand is at most $n^{1/4}$, and there are at most $\log_2n$ of
them. Thus this cost is $\widetilde O(n^{1/4})$. When $s_p=0$, its summand
is one. Reconstruction and sorting of the integer lists take polynomial
logarithmic time.

\paragraph{Factoring and elementary oracle operations.} We can factor $n$ using the randomized algorithm of Lenstra and
Pomerance~\cite{LenstraPomerance92} that runs in time $n^{o(1)}$ or by
Costa and Harvey~\cite{CostaHarvey12} deterministic
algorithm that runs in time $\widetilde O(n^{1/4})$,
which is sufficient here for factoring the known order $n$. If the identity is not supplied, compute $nx$ for
any input label $x$, using the fact that every element order divides $n$.
If an imported routine needs an inverse, compute $-x=(n-1)x$ by repeated
doubling. These simulations add only logarithmic factors to the cost.
Together, the three cost estimates prove Theorem~\ref{thm:iso}.
\end{proof}
\color{black}
\section{Decision lower bounds and explicit isomorphisms}\label{sec:consequences}

\subsection{The matching fourth-root decision lower bound}
Theorem~\ref{thm:main} cannot give a decision lower bound on elementary
Abelian $2$-groups: every promised input of order $2^r$ has the same type.
To make decision hard, we need two different types of the same order.
We use the pair from Bshouty~\cite{Bshouty25}. This result is included for
completeness and to make its relation to the new upper bound explicit.

The groups will be products of factors of orders four and two. They satisfy
the same formal identity $4x=0$, but their images under doubling have
different sizes. For order $16$, the pair is $\Z_4^2$ and
$\Z_4\oplus\Z_2^2$; their doubled images have sizes four and two. Detecting
the difference requires an unexpected equality among expressions.

\begin{proposition}[Bshouty's decision lower bound]\label{prop:iso-lower}
For $n=4^m$, let the hidden group be chosen with equal prior probability
from
\[
 H_0=(\Z_4)^m
 \quad\text{and}\quad
 H_1=(\Z_4)^{m-1}\oplus(\Z_2)^2.
\]
Conditional on this choice, give the algorithm a uniform random relabeling
with a fixed identity label. Distinguishing the two cases with average
success probability at least $2/3$ requires $\Omega(n^{1/4})$ queries.
\end{proposition}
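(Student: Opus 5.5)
We use the symbolic comparison of Section~\ref{sec:coupling} with $d=4$, since both $H_0$ and $H_1$ satisfy $4h=0$. First fix the algorithm's random bits. The symbolic simulation of Section~\ref{sec:coupling} does not depend on the hidden group: its answers are determined by formal sums with coefficients in $\Z_4$ and by uniformly chosen unregistered labels. Its entire transcript $T$, including the final decision bit, therefore has the same distribution whether the hidden group is $H_0$ or $H_1$. We then apply Lemma~\ref{lem:coupling} twice, once with $H=H_0$ and once with $H=H_1$. The actual execution on $H_b$ agrees with the simulation except on the event $\Bad_b$. Hence the total variation distance between the algorithm's output distributions on $H_0$ and $H_1$ is at most $\Prb[\Bad_0]+\Prb[\Bad_1]$. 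Average success $2/3$ with equal priors requires this distance to be at least $1/3$.

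The main step is an analogue of Lemma~\ref{lem:collision} for $\Z_4$ coefficients. The difficulty is that a difference $u-v=\sum_j c_jX_j$ may have all coefficients in $\{0,2\}$. In that case $\eval(u-v)=2\sum_{j: c_j=2} h_j$ lies in the small subgroup $2H$. For $H_0$ we have $|2H_0|=2^m=\sqrt n$, and for $H_1$ we have $|2H_1|=2^{m-1}$. We split into two cases.
\begin{itemize}
\item If some $c_t\in\{1,3\}$, then $c_th_t$ is uniform in $H$ because $c_t$ is invertible mod $4$. Conditioning on the other $h_j$, the collision probability is exactly $1/n$, as in Lemma~\ref{lem:collision}.
\item If all $c_j\in\{0,2\}$ and some $c_t=2$, then $2h_t$ is uniform on $2H$, since doubling is a homomorphism and the image of a uniform element is uniform (the argument of Lemma~\ref{lem:sample}). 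The collision probability is therefore $1/|2H|\le 2^{-(m-1)}=2/\sqrt n$.
\end{itemize}
Applying the union bound over at most $\binom M2$ pairs, with $M=3q+2$ by \eqref{eq:count} (a single output decision introduces no labels, or at most one), gives $\Prb[\Bad_b]\le \binom M2\cdot 2/\sqrt n$.

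Combining the two estimates, $1/3\le 4\binom M2/\sqrt n\le 2M^2/\sqrt n$. This forces $M=\Omega(n^{1/4})$, hence $q=\Omega(n^{1/4})$. The case $M>n$ gives a stronger bound, exactly as in the proof of Theorem~\ref{thm:main}. Averaging over the random bits finishes the proof.

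The step I expect to need the most care is the claim that the simulated transcript, including its decision, is literally independent of $b$. The simulation's label choices never consult $H$, but one must check that the coupling of Lemma~\ref{lem:coupling} is applied separately to each $b$ with the same simulation. The second delicate point is the $\{0,2\}$-coefficient case, which is where the exponent $1/4$ rather than $1/2$ comes from.
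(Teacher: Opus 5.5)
Your proposal is correct and follows the paper's proof. It uses the same $\Z_4$ symbolic simulation shared by both groups, the same odd-coefficient versus all-even split (giving collision probabilities $1/n$ and $1/|2H_j|$), and the same union bound. Your total-variation and triangle-inequality framing repackages the paper's step that the simulated bit is correct with probability exactly $1/2$ and the actual success exceeds this by at most $\epsilon_j$. Your constants ($M=3q+2$ and $4\binom M2$ rather than $3q+1$ and $3\binom M2$) are only slightly looser.
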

\begin{proof}
We use the same symbolic simulation for both groups and compare
each actual execution with this simulation. The simulation does
not depend on which group was chosen. Thus, any advantage in
distinguishing the groups must come from an unexpected equality
between the evaluations of distinct formal expressions.

\paragraph{A common symbolic simulation.}
First fix the algorithm's random coins. Use the symbolic
simulation from Section~\ref{sec:coupling}, with coefficients
in $\Z_4$. The two groups have the same order, and multiplying
any element of either group by four gives zero. The simulation
therefore has the same distribution in both cases: its answers
depend only on formal sums and random choices of labels,
without using the values assigned to the formal symbols.

A decision algorithm outputs a bit rather than a group label.
Thus, by~\eqref{eq:count}, an algorithm performing at most $q$
group additions registers at most $M=3q+1$ distinct formal
expressions, including the expression $0$ for the identity.
Assume first that $M\leq n$.

\paragraph{Collision probabilities.}
Fix the complete symbolic transcript. The formal expressions
are now fixed, while the elements assigned to the formal
symbols remain independent and uniform in the chosen group.

Consider the difference of two distinct formal expressions.
If some coefficient is odd, multiplication by that coefficient
is a bijection on $H_j$. The corresponding multiple of an
independent uniform element is therefore uniform in $H_j$.
Adding the other terms preserves uniformity, so the two
expressions have the same evaluation with probability $1/n$.

Otherwise, all coefficients are even, and at least one
coefficient is two. Doubling a uniform element of $H_j$ gives
a uniform element of $2H_j$, because every element of $2H_j$
has the same number of preimages under doubling. The other
terms also belong to $2H_j$. Their sum with the independent
uniform term is therefore uniform in $2H_j$, so the collision
probability is $1/|2H_j|$. Here,
 $|2H_0|=2^m$,
 and $|2H_1|=2^{m-1}$.
In either coefficient case, a pair collides with probability
at most $2^{-m}$ in $H_0$ and at most $2^{-(m-1)}$ in $H_1$.

There are at most $\binom M2$ pairs of distinct registered
expressions. Applying the union bound and then averaging over
the symbolic transcript, as in Lemma~\ref{lem:collision},
bounds the probability of $\Bad$ in each group by
\[
 \epsilon_0:=\binom M2\,2^{-m},
 \qquad
 \epsilon_1:=\binom M2\,2^{-(m-1)}.
\]

\paragraph{Success when the two groups are equally likely.}
Choose $j\in\{0,1\}$ uniformly. Since the symbolic simulation
is independent of $j$, the bit it outputs equals $j$ with
probability exactly $1/2$.

For each fixed $j$, Lemma~\ref{lem:coupling} compares the actual
execution on a uniformly random labeling of $H_j$ with the
symbolic simulation. Outside $\Bad$, their outputs agree.
Therefore, if the actual execution outputs $j$, either the
simulation also outputs $j$ or $\Bad$ occurs. The actual
success probability can consequently exceed the simulated
success probability by at most $\epsilon_j$.

Averaging over the two choices of $j$ gives
\[
 \Prb[\text{correct decision}]
 \leq \frac12+\frac{\epsilon_0+\epsilon_1}{2}
 = \frac12+\frac32
   \frac{\binom{3q+1}{2}}{\sqrt n},
\]
where we used $n=4^m$. This bound holds for every fixed choice
of the algorithm's random coins. Averaging over those coins
therefore gives the same bound for randomized algorithms.

An algorithm that succeeds with probability at least $2/3$
on every input also succeeds with probability at least $2/3$
under this random choice of group and labeling. Hence,
\[
 \binom{3q+1}{2}\geq \frac{\sqrt n}{9},
\]
which implies $q=\Omega(n^{1/4})$.

Finally, if $M>n$, then
 $q>{(n-1)}/{3}=\Omega(n)$,
which is stronger than the required lower bound.
\end{proof}

The two groups in Proposition~\ref{prop:iso-lower} are not isomorphic:
an isomorphism would map their doubled images bijectively, but those images
have different sizes. Hence distinguishing them is a special case of
isomorphism decision.

\begin{corollary}\label{cor:iso-tight}
Two finite Abelian groups of known order $n$ can be tested
for isomorphism with probability at least $2/3$ in
$\widetilde O(n^{1/4})$ time. Every randomized algorithm
with this success guarantee requires $\Omega(n^{1/4})$
group additions in the worst case along the orders $n=4^m$.
Thus, the time bound is optimal up to factors polynomial
in $\log n$.
\end{corollary}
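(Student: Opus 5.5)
The corollary combines results already proved, so the plan is to assemble them rather than introduce new arguments. The upper bound comes from the second part of Theorem~\ref{thm:iso}. Run Algorithm~\ref{alg:isomorphism-type} on each of the two input groups. Compare the two returned collections of primary exponent lists, which are sorted, so comparison is just equality of lists. By Theorem~\ref{thm:structure}, this comparison decides isomorphism whenever both computed types are correct. To get success probability at least $2/3$ for the pair, I will first amplify each single-group run, for example by taking the majority answer over $O(1)$ independent repetitions. Alternatively, I can note that the proof of Theorem~\ref{thm:iso} already yields failure probability at most $1/6$ per group, from the $1/12$ factoring bound plus the $1/8t$ per-prime generation bound. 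A union bound over the two groups then gives failure at most $1/3$. The cost is twice the $\widetilde O(n^{1/4})$ bound of Theorem~\ref{thm:iso}, plus polylogarithmic time for the list comparison.

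The lower bound comes from Proposition~\ref{prop:iso-lower}. Suppose an algorithm decides isomorphism with probability at least $2/3$ on every input pair of order $n=4^m$. Fix the first input to be a canonical labelled copy of $H_0=(\Z_4)^m$. Feed as the second input a uniformly random relabelling of either $H_0$ or $H_1=(\Z_4)^{m-1}\oplus(\Z_2)^2$, each with probability $1/2$. The two groups are non-isomorphic because $|2H_0|=2^m\neq2^{m-1}=|2H_1|$, and an isomorphism would map doubled images bijectively. So a correct isomorphism answer identifies which of the two was chosen.

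Additions in the fixed first group reveal nothing about the second. They can therefore be simulated for free, and only additions in the second group count toward the query bound of Proposition~\ref{prop:iso-lower}. The resulting distinguisher succeeds with average probability at least $2/3$. Proposition~\ref{prop:iso-lower} then forces $\Omega(n^{1/4})$ additions in the second group, and hence in the worst case overall.

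Combining the $\widetilde O(n^{1/4})$ upper bound with this $\Omega(n^{1/4})$ lower bound along $n=4^m$ gives optimality up to factors polynomial in $\log n$. The only step needing any care is the reduction's accounting. The decision algorithm must be allowed to compute in the first group at no charge without this affecting the symbolic-simulation argument. This holds because Proposition~\ref{prop:iso-lower} permits arbitrary free computation other than additions in the hidden group, and the first group is fixed and independent of the hidden choice.
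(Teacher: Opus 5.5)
Your proposal is correct and follows the paper's proof. The upper bound is the second part of Theorem~\ref{thm:iso}, and the lower bound is the paper's reduction: pair the hidden group of Proposition~\ref{prop:iso-lower} with a canonical copy of $H_0$ whose additions are simulated at no cost. One small slip is in your alternative to amplification: $1/12+1/8=5/24>1/6$, so that accounting does not give $1/6$ per group, unless you factor deterministically (Costa--Harvey) or factor the common order $n$ only once (total failure $1/12+1/8+1/8=1/3$); your amplification route already suffices.
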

\begin{proof}
The upper bound is Theorem~\ref{thm:iso}. For the lower bound, compare the
unknown group in Proposition~\ref{prop:iso-lower} with a canonical copy of
$H_0$, represented by coordinates modulo four. Its table answers can be
computed without querying the unknown input. An isomorphism tester would
therefore solve the distinction in Proposition~\ref{prop:iso-lower} with
no greater number of unknown-table queries.
\end{proof}

\subsection{What follows for explicit isomorphisms}

Deciding whether two groups are isomorphic does not require
producing a map between their elements. We now consider an
algorithm that constructs an isomorphism
$f:\F_2^r\longrightarrow G$. For each $i$, let $e_i$ denote
the vector with a $1$ in position $i$ and zeros elsewhere.
The elements $f(e_1),\ldots,f(e_r)$ form a basis of $G$.
Therefore, constructing the map and obtaining the actual
labels of these images together require enough group
additions to construct a basis.

\begin{corollary}[Constructing and evaluating an isomorphism]
\label{cor:explicit-iso}
Suppose a randomized algorithm, given a group
$G\cong\F_2^r$, outputs a description of an isomorphism
$f:\F_2^r\longrightarrow G$ with probability at least $2/3$
on every relabeling. Count all group additions used to
construct this description and to obtain the actual labels
of $f(e_1),\ldots,f(e_r)$. Their total worst-case number is
$\Omega(\sqrt n)$, where $n=2^r$.

If these labels can be obtained from the description without
further group additions, then constructing the description
alone requires $\Omega(\sqrt n)$ group additions in the
worst case.
\end{corollary}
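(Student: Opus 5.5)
The plan is a reduction to Theorem~\ref{thm:main}. From any algorithm $\mathcal A$ that constructs an isomorphism description, we build a basis-construction algorithm $\mathcal B$ that uses the same number of group additions.

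On input a labelled group $G\cong\F_2^r$, the algorithm $\mathcal B$ runs $\mathcal A$ to obtain a description of $f$. It then performs the counted additions that produce the labels of $f(e_1),\ldots,f(e_r)$, and outputs these $r$ labels. The count of additions for $\mathcal B$ is, by definition, exactly the total quantity bounded in the corollary. All other work, such as reading the description or simulating local computation, is free in the lower-bound model of Section~\ref{sec:background}.

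The key algebraic step is that $\mathcal B$ succeeds whenever $f$ is genuinely an isomorphism. The vectors $e_1,\ldots,e_r$ form a basis of $\F_2^r$, each of order two, and every $x\in\F_2^r$ has a unique expression $\sum c_ie_i$ with $c_i\in\{0,1\}$. Because $f$ is a bijective homomorphism, every element of $G$ is uniquely $\sum c_if(e_i)$. Each $f(e_i)$ has order two, so it is nonidentity. Hence $(f(e_1),\ldots,f(e_r))$ is a basis of $G$ consisting of $r$ elements.

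Consequently, $\mathcal B$ outputs a basis with probability at least $2/3$ on every relabeling. Theorem~\ref{thm:main} then gives $\Omega(\sqrt n)$ worst-case additions for $\mathcal B$, which is the first claim.

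The main subtlety is ensuring that the additions hidden in evaluating $f$ are counted. This is exactly why the statement charges for obtaining the labels of $f(e_i)$, and it matches the convention in Section~\ref{sec:background} that formal expressions or procedures for basis elements must be evaluated at counted cost. In particular, the lower bound of Theorem~\ref{thm:main} still applies if the description is an arbitrary procedure, since $\mathcal B$ simply counts its group additions.

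For the second claim, suppose the labels can be read off the description with no further group additions. Then the total count equals the number of additions used to construct the description. The first claim therefore directly yields $\Omega(\sqrt n)$ additions for constructing the description alone.
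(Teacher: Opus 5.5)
Your proposal is correct and follows the paper's own proof: run the isomorphism algorithm, evaluate the description on $e_1,\ldots,e_r$ with those additions counted, note that a genuine isomorphism sends the standard basis to a basis, and apply Theorem~\ref{thm:main}. Your explicit check of the basis property and your short treatment of the second claim fill in details that the paper leaves implicit.
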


\begin{proof}
Run the algorithm and evaluate its output on
$e_1,\ldots,e_r$. Whenever the algorithm produces a correct
isomorphism, the resulting labels
$f(e_1),\ldots,f(e_r)$ form a basis of $G$. This procedure
therefore constructs a basis with probability at least
$2/3$ on every relabeling. Theorem~\ref{thm:main} implies
that it requires $\Omega(\sqrt n)$ group additions in the
worst case.
\end{proof}

If an algorithm instead supplies a map
$g:G\longrightarrow\F_2^r$, obtaining a basis by this
argument requires finding elements $x_1,\ldots,x_r\in G$
such that $g(x_i)=e_i$. These elements form a basis, but
finding them may require additional group additions.
Those additions must also be counted. Thus, the lower
bound applies to constructing the map and obtaining the
basis labels together; it does not necessarily apply
to constructing the description of the map alone.
\color{black}

\bibliographystyle{plain}
\bibliography{reference}
\newpage
\appendix
\section{The Discrete-Logarithm Cost}\label{TDLC}
Write $|K|=p^b$ and let its exponent be $p^h$.
The discrete-logarithm cost is
\[
 T_{\rm DL}(K)=O\!\left(
 \frac{\log(h+1)}{\log\log(h+2)}\log|K|
 +\frac bs p^{s/2}\right).
\]
Here we explain why the running time includes the term $p^{s/2}$. The approach uses a meet-in-the-middle search: as in the two-sum
problem, they construct two lists and seek one element from each
whose sum equals the target element.
First, suppose that every basis element $b_1,\ldots,b_s$ has
order $p$. Given an element $g$, we want to find coefficients
$x_1,\ldots,x_s\in\{0,\ldots,p-1\}$ such that
$g=x_1b_1+\cdots+x_sb_s$.
There are $p^s$ possible coefficient tuples.
Suppose first that $s=2t$ is even. Form a list of all elements
$x_1b_1+\cdots+x_tb_t$
and a second list of all elements
$g-(x_{t+1}b_{t+1}+\cdots+x_sb_s)$,
letting the coefficients in each list range independently over
$\{0,\ldots,p-1\}$. Store the coefficients together with each
element. Each list has $p^t=p^{s/2}$ entries. An element appearing
in both lists gives
\[
x_1b_1+\cdots+x_tb_t
=
g-(x_{t+1}b_{t+1}+\cdots+x_sb_s),
\]
so the stored coefficients give the desired expression for $g$.
Sorting the lists by element labels allows us to find a match
in $\widetilde O(p^{s/2})$ time. This is the
\emph{baby-step giant-step} method.

When $s$ is odd, we also split one coefficient into two parts.
Specifically, with $m=\lceil\sqrt p\rceil$, every coefficient
$x\in\{0,\ldots,p-1\}$ can be written as $x=u+mv$, where
$0\leq u<m$,
$0\leq v<\lceil p/m\rceil$.
Assigning one part to each list keeps both list sizes
$O(p^{s/2})$.

Suppose $b$ has order $p^e$ and we want to find the coefficient
$x$ in $g=xb$. We determine $x$ one base-$p$ digit at a time.
Multiplying by $p^{e-1}$ gives
$p^{e-1}g=x_0(p^{e-1}b)$, where $x_0=x\bmod p$ is the first
digit. Since $p^{e-1}b$ has order $p$, we can find $x_0$ by the
meet-in-the-middle method above. We then subtract $x_0b$ from
$g$ and use the next lower power of $p$ to recover the next
digit. Repeating this process determines $x$. For several basis
elements, the same idea reduces the computation to problems
in which all basis elements have order $p$. 

Now to construct a basis from a generating set we do the following: For a finite Abelian $p$-group~$K$, Sutherland's
algorithm~\cite[Algorithm~4]{Sutherland11} constructs a basis
from the generating set $S$ one element at a time. It first
chooses an element of largest order as the first basis element.
Suppose the basis constructed so far is $b_1,\ldots,b_t$.
For each remaining generator $g$, a modified discrete-logarithm
computation finds coefficients $c_i$ and replaces $g$ by
$h=g-\sum_{i=1}^{t}c_i b_i$. These coefficients are chosen
so that $h$ is either zero or independent of the current basis:
the cyclic subgroup generated by $h$ intersects the subgroup
generated by $b_1,\ldots,b_t$ only at zero. The algorithm
discards zero elements, adds a remaining element of largest
order to the basis, and repeats these reductions. Subtracting
combinations of the chosen basis elements preserves the subgroup
generated by the entire list. Therefore, when all remaining
elements have become zero, the chosen elements are independent
and generate $K$, so they form a basis.
\end{document}